\newcommand{\INDSTATE}[1][1]{\State\hspace{1cm}}
\newlength\myindent
\newtheorem{observation}{Observation}
\algrenewcommand\algorithmicrequire{\textbf{Precondition:}}
\algrenewcommand\algorithmicindent{1mm}
\newcommand{\floor}[1]{\left\lfloor #1 \right\rfloor}
\newcommand{\size}[1]{\ensuremath{\left|#1\right|}}
\newcommand{\imply}{\supset}
\newcommand{\mil}{\ensuremath{\mathbf{M}_{\imply}}}
\newcommand{\SP}{$\mathcal{E}\mathcal{B}$}
\newcommand{\card}[1]{card(#1)}
\newcounter{specline}
\newcommand{\spec}[3][{}]{%
   \setcounter{specline}{0}%
   \ifthenelse{\equal{#1}{numbers}}%
      {\setboolean{numberspec}{true}}%
      {\setboolean{numberspec}{false}}%
   \ensuremath{
      \begin{array}{l}
         \ifthenelse{\equal{#2}{}}
            {\numberedline}
            {#2\nl}
         #3
      \end{array}
   }
}
\title{\bf Exponentially Huge Natural Deduction proofs are Redundant: Preliminary results on $M_{\imply}$}          
\author{
{\bf  Edward Hermann Haeusler$^\dagger$}\\
$^\dagger$Departamento de Inform\'{a}tica - PUC-Rio\\
%$^{\dagger\dagger}$Tuebingen Universit\"{a}t
}
\date{}
\begin{document}

\maketitle                        %%%% To set Title and Author names.

\begin{abstract}

  We estimate the size of a labelled tree by comparing the amount of (labelled) nodes with the size of the set of labels. Roughly speaking, a exponentially big labelled tree, is any labelled tree that has an exponential gap between its size, number of nodes, and the size of its labelling set. The number of sub-formulas of any formula is linear on the size of it, and hence any exponentially big proof has a size $a^n$, where $a>1$ and $n$ is the size of its conclusion. In this article, we show that the linearly height labelled trees whose sizes have a exponential gap with the size of their labelling sets posses at least one sub-tree that occurs exponentially many times in them.
  Natural Deduction proofs and derivations in minimal implicational logic ($\mil$) are essentially labelled trees. By the sub-formula principle any normal derivation of a formula $\alpha$ from a set of formulas $\Gamma=\{\gamma_1,\ldots,\gamma_n\}$ in $\mil$, establishing $\Gamma\vdash_{\mil}\alpha$, has only sub-formulas of the formulas $\alpha,\gamma_1,\ldots,\gamma_n$ occurring in it. By this relationship between labelled trees and derivations in $\mil$, we show that any normal proof of a tautology in $\mil$ that is exponential on the size of its conclusion has a sub-proof that occurs exponentially  many times in it. Thus, any normal and linearly height bounded proof in $\mil$ is inherently redundant. Finally, we  discuss how this redundancy provide us with a highly efficient compression method for propositional proofs. We also provide some examples that serve to convince us that exponentially big proofs are more frequent than one can imagine.   
  \end{abstract}
  
\section{Introduction}

Natural Deduction system, as conceived by Gentzen (\cite{Gentzen1936}),  is given by a set of rules that settle the concept of a deduction for some (logic) language. 
The system of Natural Deduction, as used and considered here, is determined by the logic language and this set of rules also called inference rules. Language and inference rules can be viewed as a {\em logical calculus}, as defined by Church (\cite{Church}). In contrast with the main formulations of logical calculus for some logics by Hilbert (\cite{HilbertHis}), Natural Deduction does not have axioms. Moreover, Natural Deduction implements in the level of the logical calculus the  {\em (meta)theorem of  deduction}, namely from $\Gamma, A\vdash A\imply B$, employing the discharging mechanism. The introduction rule for the $\imply$ application, as it follows, shows how this discharging mechanism implements in the logic calculus the {\em deduction theorem}.

\begin{prooftree}
  \AxiomC{[$A$]}
  \noLine
  \UnaryInfC{$\Pi$}
  \noLine
  \UnaryInfC{$B$}
  \RightLabel{$\imply$-Intro}
  \UnaryInfC{$A\imply B$}
\end{prooftree}

The embracing of some of the formulas $A$ in the derivation $\Pi$ of $B$ from $A$ by []s plays the role of discharging these formula occurrences from $\Pi$. Embracing a formula occurrence means that from the application of the $\imply$-Intro rule that embraces this occurrence of the $\imply$-Intro down to the conclusion of the derivation the inferred formulas do not depend anymore on these embraced occurrences of $A$.
The choice of the formulas that an application of a $\imply$-Intro embraces, as a consequence of an application of a $\imply$-intro rule, is arbitrary. The range of this choice goes from every occurrence of $A$ until none of them. The following derivations show two different ways of deriving $A\imply (A\imply A)$. Observe that in both deductions or derivations, we use numbers to indicate which is the application of the $\imply$-Intro that discharged the marked formula occurrence. For example, in the right derivation, the upper application discharged the marked occurrences of $A$, while in the left derivation, it is the lowest application that discharges the formula occurrences $A$. There is a third derivation that both applications do note discharge any $A$, and the conclusion $A\imply(A\imply A)$ keep depending on $A$. This third alternative appears in figure~\ref{third}. Natural Deduction systems can provide logical calculi without any need to use axioms. In this article, we focus on the system formed only by the $\imply$-Intro rule and the $\imply$-Elim rule, as shown below, also known by {\em modus ponens}. The logic behind this logical calculus is the purely minimal implicational logic, $M_{\imply}$.

\begin{prooftree}
  \AxiomC{$A$}
  \AxiomC{$A\imply B$}
  \RightLabel{$\imply$-Elim}
  \BinaryInfC{$B$}
\end{prooftree}

One thing to observe is that we can substitute liberal discharging mechanism by a greedy discipline of discharging that discharges every possible formula occurrence whenever the $\imply$-Intro is applied. Observe that, in this case, the derivation in figure~\ref{third} would not be possible anymore. Completeness regarding derivability would be lost. However, when considering proofs, i.e.,  derivations with no assumption undischarged, the greedy version of the $\imply$-Intro is enough to ensure the demonstrability of valid formulas.

\begin{prooftree}
  \AxiomC{$[A]^1$}
  \UnaryInfC{$A\imply A$}
  \RightLabel{$\;^1$}
  \UnaryInfC{$A\imply (A\imply A)$}
  \AxiomC{$[A]^1$}
  \RightLabel{$\;^1$}
  \UnaryInfC{$A\imply A$}
  \UnaryInfC{$A\imply (A\imply A)$}
  \noLine
  \BinaryInfC{}
\end{prooftree}

\begin{figure}[H]
\begin{prooftree}
  \AxiomC{$A$}
  \UnaryInfC{$A\imply A$}
  \UnaryInfC{$A\imply (A\imply A)$}
\end{prooftree}
\caption{Two vacuous $\imply$-Intro applications}\label{third}
\end{figure}

With the sake of providing simpler proofs of our results, we take Natural Deduction as trees. From any derivation in ND, there is a binary tree having nodes labelled by the formulas and edges linking premises to conclusion, such that the root of the tree would be the conclusion of the derivation, and the leaves are its assumptions. For example, the derivation in figure~\ref{ex:derivacao} has the tree in figure~\ref{ex:tree} representing it.  The set of formulas that the labelling formula of $u$ depends on the labelling formula of $v$ labels the edge from a node $v$ to a node $u$. This set of formulas is called the dependency set of the label of $u$ from the label of $v$. In this way, the $\imply$-intro, in fact, its greedy version,  removes the discharged formula from the dependency set, as shown in figure~\ref{ex:tree}. Note that because of this labelling of edges by dependency sets, we need one more extra edge and root node to be labelled as the dependency set of the conclusion. That is the reason for the edge linking the conclusion to the dot in figure~\ref{ex:tree}.

\begin{figure}[H]
  \begin{prooftree}
  \AxiomC{$[A]^{1}$}
  \AxiomC{$A\imply B$}
  \BinaryInfC{$B$}
  \AxiomC{$B\imply C$}
  \BinaryInfC{$C$}
  \UnaryInfC{$\LeftLabel{1}A\imply C$}
  \end{prooftree}
\caption{A derivation in $M_{\imply}$}\label{ex:derivacao}
\end{figure}

\begin{figure}[H]
      \begin{tikzpicture}[shorten >=1pt,node distance=1cm,auto]%,on grid
\tikzstyle{state}=[shape=circle,thick,minimum size=1.5cm]
\node[state] (ground) {.};
\node[state, above of=ground] (AC) {$A\imply C$};
\node[state, above of=AC] (C) {$C$};
\node[state, above left of=C] (B) {$B$};
\node[state, above left of=B] (A) {$A$};
\node[state, above right of=B] (AB) {$A\imply B$};
\node[state, above right of=C] (BC) {$B\imply C$};
\path [draw,->] (ground) edge node[midway] {{\tiny $\{A\imply B, B\imply C\}$}} (AC);
\path [draw,->] (AC) edge node[midway] {{\tiny $\{A,A\imply B, B\imply C\}$}} (C);
\path [draw,->] (C) edge node[midway,right] {{\tiny $\{B\imply C\}$}} (BC);
\path [draw,->] (C) edge node[midway, left] {{\tiny $\{A,A\imply B\}$}} (B);
\path [draw,->] (B) edge node[midway, right] {{\tiny $\{A\imply B\}$}} (AB);
\path [draw,->] (B) edge node[midway, left] {{\tiny $\{A\}$}} (A);
    \end{tikzpicture}
\caption{The tree representing derivation in figure~\ref{ex:derivacao}}\label{ex:tree}
\end{figure}

As a matter of computational representation of ND proofs as trees,  we use bitstrings induced by an arbitrary linear ordering of formulas in order to have a more compact representation of the dependency sets. Taking into account that only subformulas of the conclusion can be in any dependency set, we only need bitstrings of the size of the conclusion of the proof. In figure~\ref{ex:bitstring} we show this final form of tree representing the derivation in figure~\ref{ex:derivacao} and~\ref{ex:tree}, when the linear order $\prec$ is $A\prec B\prec C\prec A\imply B\prec B\imply C\prec A\imply C$. 

\begin{figure}[H]
    \begin{tikzpicture}[shorten >=1pt,node distance=1cm,auto]%,on grid
\tikzstyle{state}=[shape=circle,thick,minimum size=1.5cm]
\node[state] (ground) {.};
\node[state, above of=ground] (AC) {$A\imply C$};
\node[state, above of=AC] (C) {$C$};
\node[state, above left of=C] (B) {$B$};
\node[state, above left of=B] (A) {$A$};
\node[state, above right of=B] (AB) {$A\imply B$};
\node[state, above right of=C] (BC) {$B\imply C$};
\path [draw] (ground) edge node[midway] {{\tiny $000110$}} (AC);
\path [draw] (AC) edge node[midway] {{\tiny $100110$}} (C);
\path [draw] (C) edge node[midway,right] {{\tiny $000010$}} (BC);
\path [draw] (C) edge node[midway, left] {{\tiny $100100$}} (B);
\path [draw] (B) edge node[midway, right] {{\tiny $000100$}} (AB);
\path [draw] (B) edge node[midway, left] {{\tiny $100000$}} (A);
    \end{tikzpicture}
\caption{Tree with bitstrings representing derivation in figure~\ref{ex:derivacao}}\label{ex:bitstring}
\end{figure}

%\tdmargin{Colocar aqui um exemplo de arvore derivada de prova com labels vindo de $\lambda$:{\color{white} FAZER e lembrar que vai ser apontado na def de $F_{\mathcal{S},\Phi}$ Precisa mesmo ???}}   

%\td{Aqui deve vir a mensagem do artigo. Comentar o fato do principio da subformula fazer com que provas grandes sejam altamente redundantes.}

Proof-theory is the branch of logic, the foundation of Mathematics and Computer Science that studies proofs. It has a well-established set of results\footnote{The proof of the consistency of Arithmetic by Gentzen in the '30s is one of the champion results} and tools devoted to formal proofs and consistency proofs of formalized mathematical theories and metatheories.
However, because of the scope of this paper, we only briefly list the minimal results and definitions in the next section, such that the reader can make the connections between $EOL$-trees and Natural Deduction proofs need to understand the main result of this article. We explain briefly in the next paragraph, the intuition that motivates our result.

Roughly, the sub-formula principle for a logic $\mathcal{L}$ states that all we need to prove a tautology is inside itself. That is, without loss of generality (w.l.g.), If $\alpha$ is a tautology, then there is a proof of $\alpha$ using only sub-formulas of $\alpha$ in it. It is a corollary of the Normalization theorem for Natural Deduction, a central result, and a tool of proof-theory. Well, \mil satisfies the normalization and hence the sub-formula principle. We note that the amount of sub-formulas of any formula is linear on its size. Instead of the denomination `exponentially big proofs' we use the denomination `huge proofs' simply. Thus, a  proof is considered huge whenever its size is larger than or equal to any exponential on the size of its conclusion\footnote{If we follow Cook-Karp conjecture that says that computationally easy to verify and compute objects are of polynomial size, huge proofs include the hard proofs for verification, namely, the super-polynomial ones}. Thus, if a proof is exponentially big, i.e., {\bf huge}, its corresponding labeled tree is also exponential. That is, it is at least of size $a^n$, $a>1$ and $n$ is the size/length of the formula that labels its root. We remind that each sub-formula is a possible label node in the tree. We have then that a exponentially big proof is labeled with linearly many, regarding the size/length of the string that labels its root. This configuration allows us to say that at least one label repeats exponentially many times in the tree under the additional  consideration that the tree is linearly-height bounded. We show that this repetition happens in a way that a sub-tree is repeated exponentially many times.

The additional hypothesis on the linear bound on height of the proof of \mil tautologies can be taken into account without loss of generality if we consider the complexity class $CoNP$ (see appendix~\ref{appendix:hamiltonian}). Moreover, in \cite{Studia}, we show that any tautology in \mil has a Natural Deduction normal proof of height bound by the square of the size of this tautology. However, it is not easy to extend the  reasoning in this article to the case of polynomially height-bound trees..  

In the next section, section~\ref{terminology},  we provide the main terminology and definitions used in the article. Section~\ref{sec:HugeProofs} examines three examples of classes of huge proofs and shows concrete cases of the redundancy in huge proofs. Section~\ref{sec:Main} states and proves the main lemma, and finally, section~\ref{sec:Conclusion} discusses the consequences of what we show here for compressing propositional proofs.

\section{Terminology and definitions}\label{terminology}

Following the usual terminology in Natural Deduction and proof-theory, we briefly describe what we use in this article.

The left premise of a $\imply$-Elim rule is called a minor premise, and the right premise is called the major premise. We should note that the conclusion of this rule, as well as its minor premise, are sub-formulas of its major premise. We also observe that the premise of the $\imply$-Intro is the sub-formula of its conclusion. A derivation is a tree-like structure built using $\imply$-Intro and $\imply$-Elim rules. We have some examples depicted in the last section. The conclusion of the derivation is the root of this tree-like structure, and the leaves are what we call top-formulas. A proof is a derivation that has every top-formula discharged by a $\imply$-Intro application in it. The top-formulas are also called assumptions. An assumption that it is not discharged by any rule $\imply$-Intro in a derivation is called an open assumption. If $\Pi$ is a derivation with conclusion $\alpha$ and $\delta_1,\ldots,\delta_n$ as all of its open assumptions then we say that $\Pi$ is a derivation of $\alpha$ from $\delta_1,\ldots,\delta_n$.

\begin{definition}\label{def:Branch} A branch of a derivation or proof $\Pi$ is any sequence $\beta_1,\ldots,\beta_k$ of formula occurrences in $\Pi$, such that:
\begin{itemize}
\item $\delta_1$ is a top-formula, and;
\item For every $i=1,k-1$,  either $\beta_i$ is a $\imply$-Elim major premise of $\beta_{i+1}$ or $\beta_i$ is a $\imply$-Intro premise of $\beta_{i+1}$, and;
\item $\delta_k$ either is the conclusion of the derivation or the minor premise of a $\imply$-Elim.
\end{itemize}
\end{definition}

A normal derivation/proof in \mil is any derivation that does not have any formula occurrence that is simultaneously a major premise of a $\imply$-Elim and the conclusion of a $\imply$-Intro. A formula occurrence that is at the same time  a conclusion of a $\imply$-Intro and a major premise of $\imply$-Elim is called a maximal formula. 
In \cite{Prawitz} there is  the proof of the following theorem for the Natural Deduction for the full\footnote{The full propositional fragment is $\{\lor, \land, \imply, \neg, \bot\}$} propositional fragment of minimal logic. The proof of this theorem uses a strategy of application of a set of reduction rules that eliminates maximal rules. It is out of the scope of this article to provide more details on the proof of the normalization theorem.     

\begin{theorem}[Normalization]\label{theo:Normalization}
  Let $\Pi$ be a derivation of $\alpha$ from $\Delta=\{\delta_1,\ldots, \delta_{n}\}$. There is a normal proof $\Pi^{\prime}$ of $\alpha$ from $\Delta^{\prime}\subseteq\Delta$.
\end{theorem}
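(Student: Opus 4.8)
The plan is to reconstruct Prawitz's argument, specialised to the purely implicational fragment. The only conversion needed is the $\imply$-detour reduction: whenever a formula occurrence $A\imply B$ is simultaneously the conclusion of an $\imply$-Intro applied to a subderivation $\Pi_1$ of $B$ (discharging some set of occurrences of the assumption $A$) and the major premise of an $\imply$-Elim whose minor premise $A$ is the conclusion of a subderivation $\Pi_2$, I replace the whole detour by the derivation obtained from $\Pi_1$ by grafting a fresh copy of $\Pi_2$ onto each discharged leaf $A$. One checks routinely that this yields a derivation with the same conclusion (the same root label of the tree) whose set of open assumptions is contained in that of the original derivation: in the vacuous case $\Pi_1$ has no discharged $A$, so $\Pi_2$ is simply deleted and its open assumptions disappear. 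This is exactly why the statement gives $\Delta'\subseteq\Delta$ rather than equality. (If $\Pi$ is already a proof, i.e. $\Delta=\emptyset$, the output is a normal proof; in general it is a normal derivation.)

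To each maximal formula assign its \emph{degree} $\mathrm{deg}(A\imply B)$, say the number of occurrences of $\imply$ in it; note $\mathrm{deg}(A),\mathrm{deg}(B)<\mathrm{deg}(A\imply B)$. To a derivation $\Pi$ assign the pair $(d(\Pi),m(\Pi))$, where $d(\Pi)$ is the maximum degree of a maximal formula in $\Pi$ (so $d(\Pi)=0$ exactly when $\Pi$ is normal) and $m(\Pi)$ is the number of maximal formulas of degree $d(\Pi)$. Order such pairs lexicographically. If $d(\Pi)=0$ we are done. Otherwise, among all $\imply$-Elim inferences whose major premise is a maximal formula of degree $d=d(\Pi)$, choose one, $E$, that is \emph{highest}, i.e. no such inference occurs strictly above it. Since in this fragment every maximal formula is the major premise of some $\imply$-Elim, this choice forces that no maximal formula of degree $d$ occurs strictly above $E$ at all; in particular neither $\Pi_1$ nor $\Pi_2$ contains one. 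Then perform the detour reduction at $E$.

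It remains to check that $(d(\Pi'),m(\Pi'))<(d(\Pi),m(\Pi))$ for the resulting derivation $\Pi'$. The reduction duplicates only $\Pi_2$, which contains no maximal formula of degree $d$, so duplication creates no maximal formula of degree $d$. Genuinely new maximal formulas can arise only (i) at a grafting point, where a copy of $\Pi_2$ ends in an $\imply$-Intro and the leaf it replaces was the major premise of an $\imply$-Elim, and (ii) possibly where the conclusion of $\Pi_1$ meets the inference that formerly stood below $E$ — and in either case the new maximal formula is $A$ or $B$, of degree strictly less than $d$. Everything outside the affected subtree is untouched, and the chosen maximal formula $A\imply B$ is gone. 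Hence either $d(\Pi')<d$, or else $d(\Pi')=d$ and $m(\Pi')=m(\Pi)-1$; in both cases the measure strictly decreases. Since the lexicographic order on pairs of naturals is well-founded, iterating the step terminates in the required normal $\Pi'$.

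The step I expect to be the real obstacle is controlling $m(\Pi)$ under the substitution: a careless choice of which detour to contract may duplicate a subderivation that itself harbours maximal formulas of the current top degree, so $m$ can grow and the naive induction collapses. Choosing a \emph{highest} detour of top degree is precisely what blocks this, and the bulk of the work is the bookkeeping showing that (a) no maximal formula of degree $d$ survives strictly above $E$ and (b) every maximal formula created by the graft has degree $<d$. For the full propositional language one would additionally have to handle maximal \emph{segments} produced by $\lor$- and $\exists$-elimination via permutative conversions and replace $m(\Pi)$ by an appropriate sum over such segments; but in the purely implicational fragment $\mil$ considered here every maximal formula is a single occurrence and no permutative conversion is needed, which is why the argument above is comparatively short.
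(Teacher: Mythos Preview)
Your reconstruction is correct and is precisely the Prawitz-style argument the paper has in mind: the paper does not give its own proof of this theorem at all, but merely cites \cite{Prawitz} and states that ``the proof of this theorem uses a strategy of application of a set of reduction rules that eliminates maximal [formulas]. It is out of the scope of this article to provide more details on the proof of the normalization theorem.'' So there is nothing to compare; you have supplied exactly the details the paper elected to omit, specialised (as you note) to $\mil$ where only the $\imply$-detour conversion is needed and no permutative conversions or maximal segments arise.
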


In any normal derivation/proof, the format of a branch is essential and provides worth information on why huge proofs are redundant, as we will see in the next sections. Since no formula occurrence can be a major premise of $\imply$-Elim and conclusion of a $\imply$-Intro rule in a branch we have that the conclusion of a $\imply$-Intro can only be the minor premise of a $\imply$-Elim or it is not a premise at all. In this last case, it is the conclusion of the derivation. In any case, it is the last formula in the branch. Thus, any conclusion of a $\imply$-Intro has to be a premise of a $\imply$-Intro. Hence, any branch in a normal derivation is divided into two parts (possibly empty). The E-part that starts the branch with the top-formula and every formula occurrence in it is the major premise of a $\imply$-Elim. There is a formula occurrence that is the conclusion of a $\imply$-Elim and premise of a $\imply$-Intro rule that is called minimal formula of the branch. The minimal formula starts the I-part of the branch, where every formula is the premise of a $\imply$-Intro, excepted the last formula of the branch. From the format of the branches, we can conclude that the sub-formula principle holds for normal proofs in Natural Deduction for \mil, in fact, for many extensions of it.

\begin{corollary}[Sub-formula principle]
  Let $\Pi$ be a normal derivation of $\alpha$ from $\Delta=\{\delta_1,\ldots,\delta_m\}$. It is the case that for every formula occurrence $\beta$ in $\Pi$, $\beta$ is a sub-formula of either $\alpha$ or of some of $\delta_i$.
\end{corollary}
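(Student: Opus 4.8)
The plan is to prove the statement by induction on the number of inference-rule applications in the normal derivation, the induction hypothesis being the full statement applied to every strictly smaller normal derivation (this generality is needed because the subderivations one recurses into do not in general have $\alpha$ as conclusion; note also that every subderivation of a normal derivation is normal). The base case, $\Pi$ a single formula occurrence, is immediate: that occurrence is $\alpha$ and, being a top-formula that no $\imply$-Intro application can discharge, it is an open assumption, so $\alpha\in\Delta$. If $\Pi$ ends with an $\imply$-Intro, then $\alpha=A\imply B$ and the immediate subderivation $\Pi_1$ is a normal derivation of $B$ from a subset of $\Delta\cup\{A\}$; since both $A$ and $B$ are sub-formulas of $A\imply B$, the induction hypothesis applied to $\Pi_1$ gives the claim for $\Pi_1$, and the one new occurrence is $\alpha$ itself.

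The substantive case is when $\Pi$ ends with an $\imply$-Elim, and here I would lean on the branch structure discussed above. Consider the branch $\beta_1,\ldots,\beta_k$ ending at $\beta_k=\alpha$. Since $\alpha$ is the conclusion of an $\imply$-Elim, $\beta_{k-1}$ is its major premise; and because a major premise of an $\imply$-Elim in a normal derivation is never the conclusion of an $\imply$-Intro, $\beta_{k-1}$ is again the conclusion of an $\imply$-Elim, unless it is already a top-formula. Iterating, the whole branch is a pure E-part: each $\beta_i$ with $i<k$ is the major premise of the $\imply$-Elim concluding $\beta_{i+1}$, and $\beta_1$ is a top-formula $t$. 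Now two further uses of normality close the argument. First, any application discharging the occurrence $t$ would have to lie among the rules of $\Pi$ that connect $\beta_1,\ldots,\beta_k$ (these are exactly the rules on the path from $t$ down to the root), yet every such rule is an $\imply$-Elim while a discharging rule is an $\imply$-Intro; hence $t$ is undischarged, so $t=\delta_j$ for some $j$. Second, descending an E-part the conclusion of each $\imply$-Elim is a sub-formula of its major premise, so every $\beta_i$, and in particular $\alpha$, is a sub-formula of $\delta_j$; writing $\beta_i=\mu_i\imply\beta_{i+1}$, each minor premise $\mu_i$ is also a sub-formula of $\delta_j$. Finally, the subderivation $\Pi^{(i)}$ of $\Pi$ with conclusion $\mu_i$ is strictly smaller and normal, and its open assumptions are still open in $\Pi$ (the only rules of $\Pi$ below $\Pi^{(i)}$ are the discharge-free $\imply$-Elim's of the branch), so it is a normal derivation of $\mu_i$ from a subset of $\Delta$; by the induction hypothesis every occurrence in $\Pi^{(i)}$ is a sub-formula of $\mu_i$, hence of $\delta_j$. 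As $\Pi$ is the union of the branch $\{\beta_1,\ldots,\beta_k\}$ with the subderivations $\Pi^{(1)},\ldots,\Pi^{(k-1)}$, every occurrence is accounted for.

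I expect the $\imply$-Elim step to be the only genuine obstacle: a crude induction collapses there, since the major premise $A\imply B$ of an $\imply$-Elim is larger than, and not a sub-formula of, its conclusion $B$. Normality is precisely what repairs this — it forces the E-part of the branch at the root to climb all the way up to a top-formula and forbids that top-formula from having been discharged, so that a single open assumption $\delta_j$ dominates the whole branch and everything hanging off it. The branch-structure lemma itself, namely that in a normal $\mil$ derivation every branch decomposes into an E-part followed by an I-part around a minimal formula, I would take as already established in the discussion preceding the statement.
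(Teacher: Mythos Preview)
Your argument is correct and is essentially the standard Prawitz-style proof via the branch structure, which is exactly what the paper gestures at: the paper gives no proof of this corollary beyond the sentence ``From the format of the branches, we can conclude that the sub-formula principle holds,'' so your write-up is in fact \emph{more} detailed than the paper's. The one imprecision is in the last step of the $\imply$-Elim case: when you invoke the induction hypothesis on $\Pi^{(i)}$ you write ``every occurrence in $\Pi^{(i)}$ is a sub-formula of $\mu_i$, hence of $\delta_j$,'' but the hypothesis actually gives ``sub-formula of $\mu_i$ \emph{or of some open assumption of $\Pi^{(i)}$}''; since you have already argued those open assumptions lie in $\Delta$, both alternatives still land in $\{\delta_1,\ldots,\delta_m\}$ and the conclusion stands.
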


This corollary ensures that without loss of generality, any Natural Deduction proof of a \mil tautology has only sub-formulas of it occurring in it. Normal proofs/derivations offer the $EOL$-tree abstraction in forms of the trees associated with derivations in Natural Deduction for \mil as we show in the sequence. The definition of $EOL$-tree facilitates the proof of the main result of this article. With labelled trees, we can focus on the combinatorial aspects rather than the proof-theoretical.

We assume the standard definition of a tree and a (possibly) incomplete binary tree. The size of a, possibly incomplete,  tree $\langle V, E\rangle$ is $\size{V}$, the number of vertexes of the tree. As a tree is a simple graph, the number of edges is upper-bounded by $\size{V}^{2}$.  The root of a tree is the unique node $r\in V$, such that, there is no $v\in V$, such that $\langle r,v\rangle\in E$. The level of a node $v$, $lev(v)$, in a tree $\mathcal{T}=\langle V, E\rangle$, is the number of nodes in a path from $v$ to the root of the tree. This can be defined in a recursive/inductive way as: (basis) $lev(r)=0$; (rec) if $lev(u)=n$ and $\langle u,v\rangle\in E$ then $lev(v)=n+1$.

The following definition is motivated by Natural Deduction, in \mil, derivation trees, as already said.  
When applying lemma~\ref{lemma:Main} below, we can think of Natural Deduction derivation trees in $M_{\imply}$. We reinforce that a vertex with two children plays the role of an instance of a $\imply$-Elimination role application having it as the conclusion, and a vertex with one child plays the role of an instance of $\imply$-Introduction with it as the conclusion. The leaves are either hypothesis, also called open assumptions, of the derivation or discharged assumptions. In this section scope, there is no representation for the discharging function attached to each instance of $\imply$-Introduction application.   In the concrete case that the labels of the nodes, set $B$ below, are formulas and the order is the sub-formula ordering between propositional implicational formulas\footnote{We advise the reader to not confuse this partial ordering abstracted from the sub-formula ordering, with the linear and arbitrary ordering that is used to generate the bitstrings mentioned at the introduction}. However, for the sake of simplicity, we consider any partial order in the proof of the main results.
%\td{Cuidado aqui para verificar se a versao abstrata vale mesmo}

  %\todo[linecolor=red,backgroundcolor=red!25,bordercolor=red]{It remains to define what is an ordered tree, with the three types of edges $E_{uni}$, $E_{left}$ and $E_{right}$.}
  
\begin{definition}[{\bf EOL-Binary tree}]\label{EOLTree} An edge-ordered-labeled labeled binary tree $\mathcal{T}$ is a structure $\langle V,E_{L}\cup E_{R}\cup E_{U},l,B\rangle$, where:
\begin{enumerate}
\item $\langle V,E_{L}\cup E_{R}\cup E_{U}\rangle$ is a (possibly incomplete) binary tree and;
\item $l:V\rightarrow B$ is the labeling function, with $B$ a finite and partially ordered set of labels, with a partial operation $\odot$ and;
\item $E_{L},  E_{R}$ and $E_{U}$ are mutually disjoint, and;
\item Whenever $\langle v,v_1\rangle\in E_R$ and $\langle v,v_2\rangle\in E_R$ then $v_1=v_2$;
\item Whenever $\langle v,v_1\rangle\in E_L$ and $\langle v,v_2\rangle\in E_L$ then $v_1=v_2$;
\item $\langle v,v_1\rangle\in E_{L}$, if and only if, $\langle v,v_2\rangle\in E_{R}$, $v_2\neq v_1$, and;
\item If $\langle v,v^{\prime}\rangle\in E_{U}$ and $\langle v,v^{\prime\prime}\rangle\in E_{U}$ then $v^{\prime}=v^{\prime\prime}$, and;
\item\label{item:one-to-one} If $\langle v,v_1\rangle\in E_{L}$ and $\langle v,v_2\rangle\in E_{R}$ then  $l(v)\prec_{B} l(v_2)$ and $l(v_1)\prec_{B} l(v_2)$
  and for each $l(v_2)$, such that, $l(v)\prec_{B} l(v_2)$, there is only one $b\in B$, such that $b\odot l(v)$ is equal to $l(v_2)$. This $b$ should be equal to $l(v_1)$;
\item\label{item:unique-U} If $\langle v,v^{\prime}\rangle\in E_{U}$ then there is $q\in B$, such that, $q\odot l(v^{\prime})=l(v)$ and, of course, $l(v^{prime})\prec_{B} l(v)$
\end{enumerate}
\end{definition}

In this article, we are interested in the kind of EOL-tree that is linearly-bounded on the height regarding the size of its labelling set. That is, $T$ is linearly bounded on the height when $h(T)\leq k\times\card{B(T)}$, for some $0<k\in\mathbb{R}$ where $h(T)$ is the height of the tree $T$\footnote{We can say also that $h(T)\in\mathcal{O}(B(T))$}. We call these trees linearly height $B$-labelled trees or linear-height $EOL$-trees. In section~\ref{sec:Main}, we prove that for any linearly-height $B$-labelled tree of exponential or bigger size, there is a tree that occurs exponentially many times as a subtree in it. The content of  lemma~\ref{lemma:Main} states this. In the sequel, we provide more definitions that we use. In section~\ref{sec:Main}, we also comment on the non-triviality of the extension of this main result to polynomially height bounded $EOL$-trees. 

A skeletal-tree is defined as any non-empty $B$-labeled edge-labelled tree with edge labels {\em U}, {\em L} and {\em R}. In the sequel, we ask the reader to remind herself (himself) the concept of injective tree-mapping from labelled trees into labelled trees. (see any book on graph theory or theory of computation, for example, \cite{arora}). 
 
\begin{definition}[{\bf Skeletal-tree occurrence}]
  A  skeletal-tree instance $\mathcal{S}$ occurs  in a $B$-labeled tree $\mathcal{T}=\langle V, E, l, B\rangle$, iff,  there is an injective $B$-labeled tree mapping $f$ from $\mathcal{S}$ into $\mathcal{T}$, such that, for each $v, u\in V_{\mathcal{S}}$, if $\langle v,u\rangle\in E_{\lambda}$ then $\langle f(v),f(u)\rangle\in E_{\lambda}$, $\lambda=U,L,R$; $l(v)=l(f(v))$ and $l(u)=l(f(u))$.
%   \todo[linecolor=red,backgroundcolor=red!25,bordercolor=red]{Need to be defined}
\end{definition}

Whenever a skeletal tree instance  $\mathcal{S}$ occurs in a tree $\mathcal{T}$, we say that there is a sub-tree of the skeletal form $\mathcal{S}$ in the tree. We also say simply that $\mathcal{S}$ is a sub-tree occurring in $\mathcal{T}$. We can conclude that any sub-tree of an instance of a skeletal occurring in a tree is also a skeletal sub-tree instance of this first tree. A skeletal-tree occurrence/instance is full whenever if it cannot be extended to other sub-tree by adding any contiguous vertex from the tree to it. Sometimes we use the term ''Skeletal-tree instance', instead of ``skeletal-tree occurrence''. When a Skeletal-tree instance $\mathcal{Y}$ occurs in a tree $\mathcal{T}$ and $\mathcal{Y}$'s root is at level $k$ then we say that $\mathcal{Y}$ occurs at level $k$ in $\mathcal{T}$.

 Concerning the computational complexity of propositional proofs, we count the size of a proof as to the number of symbol occurrences used to write it. If we put all the symbol occurrences used to write a Natural Deduction derivation $\Pi$ side by side in a long string then the size of the derivation, denoted by $\size{\Pi}$, is the length of this string. The function $\size{\;}:Strings\longrightarrow \mathbb{N}$, the size-of-string function, denotes the mapping of strings to their corresponding sizes\footnote{Some authors use the term {\em lenght} instead of {\em size}}. For derivations $\alpha$ from $\Delta=\{\delta_1,\ldots, \delta_n\}$ we estimate the complexity of the derivation by means of a function of $\size{\alpha}+\sum_{i=1,n}\size{\delta_i}$ into the size of the derivation itself. Thus, we should not take the complexity of a derivation individually. It is taken together with the set of all derivations.

%\todo[linecolor=red,backgroundcolor=red!25,bordercolor=red]{Reescrever este paragrafo sobre CC {\color{white} Paragrafo Importante para fazer somente depois do resto do artigo pronto}} 

A set $\mathcal{S}$ of $EOL$-trees is unlimited, if and only if, for every $n>0$ there is $T\in \mathcal{S}$, such that, $\size{T}>n$. 

\begin{definition}\label{SP:local}
  An unlimited set $\mathcal{S}$ of EOL-trees is huge (exponentially big or $\mathcal{E}\mathcal{B}$ for short) iff there are $a\in\mathbb{R}$, $a>1$, $n_0,p\in\mathbb{N}$, $p>1$, $c\in\mathbb{R}$, $c>0$, such that, for every $n>n_0$ and for every $T\in\mathcal{S}$, if $\card{B(T)}= n$ then $\size{T}\ge c\times a^{n^p}$.
\end{definition}

If $A$ is a set then we use $\card{A}$ to denote the number of elements in $A$.

In this article, we use an alternative, equivalent, and more suitable definition for use in the demonstration of our result than the above one. We use the following auxiliary definitions to define it. 

\begin{definition}
  Let $\mathcal{S}$ be an unlimited set of $EOL$-trees and $\size{\;}$ the size-of-string function. The function $len_{\mathcal{S}}:\mathcal{S}\rightarrow \mathbb{N}$ is the defined as $len_{\mathcal{S}}(T)=\size{T}$.
\end{definition}

In the definition above, we advise the reader that the size of the alphabet used to write the strings is at least 2. Unary strings cannot be consistently used in computational complexity estimations, since its use trivializes\footnote{If there is a NP-complete Formal Language $L\subseteq\Sigma^{\star}$, where $\Sigma$ is a singleton, then $NP=P$, see for example \cite{Creszenci} (theorem 5.7, page 87)} the conjecture $NP=P$. We use to call an alphabet reasonable whenever it has at least two symbols. 

\begin{definition}
  A function  $f:\mathbb{N}\longrightarrow\mathbb{N}$ is exponential or bigger if and only if there are $a\in\mathbb{R}$, $a>1$, $n_0,p\in\mathbb{N}$, $c\in\mathbb{Q}$, $p>1$, $c>0$, such that, $\forall n>n_0$, $f(n)\ge c\times a^{n^p}$.
\end{definition}

Technically, the above definition says that a funcion is exponential or bigger whenever it has a tight exponential lower bound. 

Consider a property $\Phi(x)$ on $EOL$-trees. This property is used to select, from a set $\mathcal{S}$, all the $EOL$-trees satisfying it. This defines a subset $\{T\in \mathcal{S}:\Phi(T)\}$ of $\mathcal{S}$. As an example we can set a particular $\Phi_{\Gamma,\alpha}(x)$, where $\Gamma$ is a set of labels and $\alpha$ is a label, to be true only on $EOL$-trees $T$, such that $leaves(T)=\Gamma$ and $r(T)=\alpha$. Thus, given a set $\mathcal{S}$ of $EOL$-trees, the set $\{T\in \mathcal{S}:\Phi_{\Gamma,\alpha}(T)\}$ is the subset of all trees from $\mathcal{S}$ that have $\Gamma$ as labelling the leaves and $\alpha$ labelling the respective root of each of them. We use properties as $\Phi_{\Gamma,\alpha}(x)$  to specify the set of all trees that correspond to Natural Deduction derivations of a formula $\alpha$ from a set of hypothesis $\Gamma$. We further refine this to get the set of all minimal trees (derivations) of $\alpha$ from $\Gamma$. For example
\[
Min_{\mathcal{S}}(\Gamma,\alpha)=\{T\in\mathcal{S}:\mbox{$\Phi_{\Gamma,\alpha}(T)\;\land\;\forall T^{\prime}(\Phi_{\Gamma,\alpha}(T^{\prime})\;\rightarrow\; \size{T}\leq\size{T^{\prime}})$}\}
\]
is the set of all smallest  $EOL$-trees satisfying $\Phi_{\Gamma,\alpha}(x)$. We can see these $EOL$-trees as Natural Deduction derivations in $\mil$, having then the set of all smallest derivations of $\alpha$ from $\Gamma$ in $\mil$. In the general case, where the predicate $\Phi(x)$ is arbitrary, we denote the set above by $Min_{\mathcal{S}}(\Phi)$, that is:
\[
Min_{\mathcal{S}}(\Phi)=\{T\in\mathcal{S}:\mbox{$\Phi(T)\;\land\;\forall T^{\prime}(\Phi(T^{\prime})\;\rightarrow\; \size{T}\leq\size{T^{\prime}})$}\}
\]

\begin{definition}\label{def:FuncaoF}
  Let $\mathcal{S}$ be an unlimited set of $EOL$-trees. Let $\Phi(x)$ represent a property on $EOL$-trees of $\mathcal{S}$ and let $\Phi_{\mathcal{S},m}(x)$ be defined as $(x\in\mathcal{S}\;\land\;\Phi(x)\;\land\;\card{B(x)}\leq m)$ with $0<m\in\mathbb{N}$. We define the function $F_{\mathcal{S},\Phi}:\mathbb{N}\longrightarrow\mathbb{N}$ that associates do each natural number $m$ the least tree satisfying $\Phi_{\mathcal{S},m}(x)$.
  \[
  F_{\mathcal{S},\Phi}(m)=\left\{\begin{array}{ll} 0 & \mbox{if $m=0$} \\
                                      \size{Min_{\mathcal{S}}(\Phi_{\mathcal{S},m})} & \mbox{if $m>0$}               
  \end{array}\right.
  \]
\end{definition}

We point out that depending on $\Phi$, the above function $F_{\mathcal{S},\Phi}$ can be quite uninteresting. For example, if $\Phi$ is satisfiable by every tree in $\mathcal{S}$ then  $F_{\mathcal{S},\Phi}(m)=1$, for every $m>0$. Any tree $T$ with only one node,  such that, it is labelled by any element of $B(T)$, is a smallest\footnote{We do not take the null tree in this work since it represents no meaningful representation of data in our case} tree that satisfies $\Phi$. On the other hand, we can have $\Phi_{\mathcal{S},\Phi}(m)$ true only when $T$ is a tree that represents a proof of a \mil tautology $\alpha$, $m=\size{\alpha}$ and $\Phi(T)$ is true whenever $T$ is a tree representing a proof of $\alpha$.

The following proposition points out an alternative and more adequate definition for a family of exponential or bigger sized trees as already previously mentioned. Observe that if $\mathcal{A}$ is the set of all $EOL$-trees and $\Phi(x)$ is a property defining a subset $\mathcal{S}$ of $\mathcal{A}$ and $\Phi_{\mathcal{S},m}$ is defined as in definition~\ref{def:FuncaoF} then $\mathcal{S}=\Phi(\mathcal{A})=\bigcup_{m\in\mathbb{N}}\Phi_{\mathcal{A},m}(\mathcal{A})$. The reader should note that we use $\Phi(\mathcal{A})$ as an abbreviation of $\{T:\mbox{$\Phi(T)\land T\in\mathcal{A}$}\}$. Observing what is discussed in the last paragraphs, we have the following proposition.

\begin{proposition}\label{prop:FuncaoF}
  Let $\mathcal{S}\subset\mathcal{A}$ be an unlimited set of $EOL$-trees. Let $\Phi(x)$ be the defining property of $\mathcal{S}$. We have then that  $\mathcal{S}$ is \SP\; if and only if $F_{\mathcal{A},\Phi}$ is a exponential or bigger function from $\mathbb{N}$ in $\mathbb{N}$.
  \end{proposition}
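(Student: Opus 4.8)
The plan is to unravel the definitions on both sides and show that the two formulations of ``exponentially big'' (the direct one in Definition~\ref{SP:local} and the one phrased through $F_{\mathcal{A},\Phi}$) coincide when $\mathcal{S}=\Phi(\mathcal{A})$. First I would establish the forward direction: assume $\mathcal{S}$ is \SP, so there are constants $a>1$, $n_0,p\in\mathbb{N}$ with $p>1$ and $c>0$ such that every $T\in\mathcal{S}$ with $\card{B(T)}=n>n_0$ has $\size{T}\ge c\times a^{n^p}$. I want to show $F_{\mathcal{A},\Phi}(m)\ge c'\times a'^{m^{p'}}$ for suitable constants and all large $m$. Here $F_{\mathcal{A},\Phi}(m)=\size{Min_{\mathcal{A}}(\Phi_{\mathcal{A},m})}$ is the size of a smallest tree $T$ satisfying $\Phi(T)$ with $\card{B(T)}\le m$. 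The key point is that such a minimal tree $T$ still lies in $\mathcal{S}=\Phi(\mathcal{A})$ and has $\card{B(T)}\le m$; if additionally $\card{B(T)}>n_0$ we can apply the \SP\ bound at $n=\card{B(T)}$, but since $a^{n^p}$ is increasing in $n$ this does not immediately give a bound in terms of $m$ --- it gives a lower bound via the \emph{actual} label-count of the minimizer, which could be small. The right way around this is to observe that for the $F$ function to be forced large we must be in a situation where \emph{every} tree satisfying $\Phi$ and using at most $m$ labels is already large; the unlimitedness of $\mathcal{S}$ together with the monotonicity of $\Phi_{\mathcal{A},m}$ in $m$ is what lets us pick, for each $m$, a witness and transfer the bound.

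For the converse direction I would assume $F_{\mathcal{A},\Phi}$ is exponential or bigger: there are $a>1$, $n_0,p$, $c>0$ with $F_{\mathcal{A},\Phi}(m)\ge c\times a^{m^p}$ for all $m>n_0$. I then take an arbitrary $T\in\mathcal{S}$ with $\card{B(T)}=n>n_0$ and must show $\size{T}\ge c''\times a''^{n^{p''}}$. Since $T$ satisfies $\Phi$ and $\card{B(T)}=n\le n$, $T$ is one of the competitors in the minimization defining $F_{\mathcal{A},\Phi}(n)$, hence $\size{T}\ge \size{Min_{\mathcal{A}}(\Phi_{\mathcal{A},n})}=F_{\mathcal{A},\Phi}(n)\ge c\times a^{n^p}$, which is exactly the required bound (with the same constants). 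This direction is the clean one: it is essentially the observation that the minimum over a set is a lower bound for every member, combined with the fact that $T$ itself belongs to the feasible set $\Phi_{\mathcal{A},n}(\mathcal{A})$ because $\card{B(T)}\le n$.

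The genuinely delicate step, and the one I expect to be the main obstacle, is making the forward direction rigorous: from ``every tree in $\mathcal{S}$ with exactly $n$ labels is exponentially big in $n$'' we need ``the smallest tree in $\mathcal{S}$ with at most $m$ labels is exponentially big in $m$.'' The subtlety is that the $\size{T}\ge c\,a^{n^p}$ bound is keyed to $\card{B(T)}$, while $F_{\mathcal{A},\Phi}(m)$ is keyed to the budget $m$ on the label-set size, and a minimizer might achieve a small size precisely by using far fewer than $m$ labels. One must argue that this cannot drive $F$ below an exponential in $m$: either because $\Phi$ forces the label count up (as happens in the intended application, where $\Phi_{\Gamma,\alpha}$ pins down the root and leaves and the sub-formula principle ties $\card{B(T)}$ to $\size{\alpha}+\sum_i\size{\delta_i}$, so $\card{B(T)}$ is essentially $m$), or because the definition of \SP\ as stated (``for every $T\in\mathcal{S}$ with $\card{B(T)}=n$'') together with unlimitedness already forbids small trees with few labels. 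I would handle this by noting that $\mathcal{S}$ unlimited and \SP\ means trees with few labels simply cannot be too small, and that $\Phi_{\mathcal{A},m}$ is monotone in $m$, so $F_{\mathcal{A},\Phi}(m)=\min\{\size{T}:T\in\mathcal{S},\card{B(T)}\le m\}$ is non-decreasing and each of its values is a $\size{T}$ for some $T\in\mathcal{S}$ with label count $\le m$ but $>n_0$, whence $\ge c\,a^{(\card{B(T)})^p}$; to close the gap to $a^{m^p}$ one invokes that in the relevant families the minimizing $T$ uses $\Theta(m)$ labels, so a constant adjustment in $a$ and $c$ absorbs the difference. I would state this last point carefully, possibly as a hypothesis on the family $\mathcal{S}$, since without some such link between the label budget and the actual label count the two definitions are only equivalent up to that proviso.
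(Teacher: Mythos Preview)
The paper does not actually prove this proposition: it is stated immediately after the sentence ``Observing what is discussed in the last paragraphs, we have the following proposition,'' and no argument is given. So there is nothing to compare your attempt against; your write-up is already far more detailed than the paper's treatment.

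Your converse direction is correct and is exactly the right observation: any $T\in\mathcal{S}$ with $\card{B(T)}=n$ is a competitor in the minimum defining $F_{\mathcal{A},\Phi}(n)$, hence $\size{T}\ge F_{\mathcal{A},\Phi}(n)$, and the exponential lower bound on $F$ transfers with the same constants.

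Your analysis of the forward direction is also on target in spirit, but contains one slip that in fact sharpens your own worry rather than resolving it. You write that $F_{\mathcal{A},\Phi}(m)=\min\{\size{T}:T\in\mathcal{S},\ \card{B(T)}\le m\}$ is non-decreasing. With the constraint $\card{B(T)}\le m$ as the paper writes it, the feasible set grows with $m$, so the minimum is \emph{non-increasing}, not non-decreasing. This means $F_{\mathcal{A},\Phi}$ is eventually dominated by the size of the smallest tree in $\mathcal{S}$ altogether, and cannot literally be exponential in $m$ unless something external forces $\card{B(T)}$ to grow with $m$. So the gap you flag in the forward direction is real and is not repaired by monotonicity; your concluding remark---that one must assume the minimising $T$ uses $\Theta(m)$ labels, or equivalently that $\Phi$ ties the label budget to the actual label count (as happens in the intended application via the sub-formula principle)---is exactly the missing hypothesis. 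The paper's informal treatment glosses over this, and your caution here is warranted rather than excessive.
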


We note that the size of a $EOL$-tree considers the representation of it when coded in a string under a reasonable alphabet\footnote{An alphabet is reasonable if it has more than one symbol}.

The following definition is quite useful in the statement of the main lemma of this article.

\begin{definition}[Set of nodes at level $i$ labeled with $q$]
  Given an $EOL$-tree $T=\langle V,E_{L}\cup E_{R}\cup E_{U},l,B\rangle$, a natural number $i\in\mathbb{N}$ and a label $q\in B(T)$. We use the notation $V_{T}^{i,q}$ to denote:
\[
\{v\in V(T):\mbox{$lev_{T}(v)=i$ and $l_{T}(v)=q$}\}
\]
  \end{definition}

\begin{definition}\label{def:occ}
  Let $\mathcal{S}$ be an unlimited set of $EOL$-trees and $\size{\;}$ the size-of-string function. The functions $occ_{\mathcal{S}}^{i,q}:\mathcal{S}\rightarrow \mathbb{N}$, $i\in\mathbb{N}$, and $q\in SYMB$, are defined below, where $\bigcup_{T\in\mathcal{S}}B(T)$ is the symbol set\footnote{We allow the existence of an infinite set of symbols, but this is not essencial at this point of our formalization} that labels the trees in  $\mathcal{S}$:
  \[
  occ^{i,q}_{\mathcal{S}}(T)=\size{V_{T}^{i,q}}
  \]
\end{definition}

Note that for any $T\in\mathcal{S}$, if $i>h(T)$ then $occ^{i,q}_{\mathcal{S}}(T)=0$ and that for any $q\not\in B(T)$, $occ^{i,q}_{\mathcal{S}}(T)=0$ too. 

%% Let $\mathcal{A}$ be the set of all $EOL$-trees associated with the Natural Deduction minimal implicational logic derivations. Consider that $\Phi(T)$ holds only for  $T$, such that, $T$ is an $EOL$-tree derived from a N.D. proof of $r(T)$, i.e., the conclusion of $T$. We have then the following proposition that we proved by using the fact that \mil~ is $PSPACE$-complete and that $NP$ is the class of all the decision problems with positive polynomially sized and verifiable certificates, as it is discussed in \cite{Ladner} and \cite{Svejdar}.

%% \begin{proposition}
%%   $NP=PSPACE$, if and only if, $F_{\mathcal{S},\Phi}$ is polynomially upper bounded
%%   \end{proposition}

The following lemma is essential in the proof of the results shown in this article.

\begin{lemma}[Super-Exponential Pigeonhole Principle]\label{lemma:SPPHP}
  Let $k:\mathbb{N}\longrightarrow\mathbb{N}$ be a polynomial and  let $f_i:\mathbb{N}\longrightarrow \mathbb{N}$ be a function, for each $i\in\mathbb{N}$.Consider $m\in\mathbb{N}$ and the function $g_m:\mathbb{N}\longrightarrow\mathbb{N}$ defined as:
  \[
  g_m(n)=\sum_{i\leq k(m)} f_i(n)
  \]
  For each $m$, if  $g_m$ is \SP\ on $m$ then there is $0\leq j\leq k(m)$, such that, $f_j$ is \SP,  on $m$, too. 
\end{lemma}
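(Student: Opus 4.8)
The plan is to prove the contrapositive: assuming that \emph{none} of the functions $f_0,\ldots,f_{k(m)}$ is \SP\ on $m$, I will show that $g_m$ cannot be \SP\ on $m$ either. The key observation is that being \SP\ (in the sense of the definition of ``exponential or bigger'') is a statement about a uniform lower bound of the form $c\cdot a^{m^p}$ that holds for all large $m$; the negation of such a statement is \emph{not} simply an upper bound, so I must be slightly careful. First I would unwind the definitions: $g_m$ is \SP\ on $m$ means there exist $a>1$, $c>0$, $p>1$, $n_0$ such that $g_m(n)\ge c\cdot a^{m^p}$ for all $m>n_0$ (here I read the variable carefully — the quantity $g_m(n)$ is compared against an exponential in $m$, and the relevant index driving the growth is $m$, with $n$ essentially a dummy or a fixed auxiliary argument; I would state explicitly which reading I am using and keep it consistent with how the lemma is applied in Section~\ref{sec:Main}).

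The core of the argument is a pigeonhole over the summands. Since $k$ is a polynomial, the sum $g_m(n)=\sum_{i\le k(m)} f_i(n)$ has at most $k(m)+1$ terms, and $k(m)+1$ is itself only polynomially large in $m$. Hence $g_m(n)\le (k(m)+1)\cdot \max_{i\le k(m)} f_i(n)$, so there is an index $j=j(m,n)\le k(m)$ with
\[
f_j(n)\;\ge\;\frac{g_m(n)}{k(m)+1}.
\]
If $g_m$ satisfies $g_m(n)\ge c\cdot a^{m^p}$ for all large $m$, then $f_{j(m,n)}(n)\ge \frac{c}{k(m)+1}\,a^{m^p}$. The polynomial denominator $k(m)+1$ is absorbed by shrinking the base slightly: for any $a>1$ and any polynomial $P$, we have $P(m)^{-1} a^{m^p}\ge a'^{\,m^p}$ for all sufficiently large $m$ whenever $1<a'<a$ (indeed $\log(a/a')\cdot m^p$ eventually dominates $\log P(m)$). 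So each chosen $f_j$ still has a super-exponential lower bound \emph{at the arguments where it is selected}.

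The main obstacle — and the step I would spend the most care on — is that the index $j$ depends on $m$ (and possibly $n$), whereas the conclusion needs a \emph{single} $j\in\{0,\ldots,k(m)\}$ that works. I would resolve this by a second pigeonhole: the map sending each large $m$ to a selected index $j(m)\in\{0,\ldots,k(m)\}$ cannot keep all fibers finite, because there are infinitely many large $m$ but, restricting to $m$ in a range where $k(m)$ is bounded... — this does not quite work since $k(m)\to\infty$. The cleaner route, which I would adopt, is to fix the reading of the lemma as it is actually used: in the application, $m$ ranges and for each $m$ we only need existence of \emph{some} $j\le k(m)$ with $f_j$ \SP\ — and there "$f_j$ \SP\ on $m$" should be read as a property of the single pair, matching Definition with $n$ the running variable. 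Under that reading, the contrapositive gives: if every $f_i$ ($i\le k(m)$) fails the lower bound, i.e.\ for each $i$ and each candidate $(a,c,p,n_0)$ there are infinitely many witnesses where $f_i(n)<c\cdot a^{n^p}$, then combining finitely many such failure sets (one per $i\le k(m)$, using that there are only $k(m)+1$ of them) still leaves infinitely many common witnesses $n$ where $g_m(n)=\sum_i f_i(n)$ is below, say, $(k(m)+1)\cdot c\cdot a^{n^p}$, and again the polynomial factor is swallowed by a base adjustment, contradicting $g_m$ being \SP. I would present the intersection-of-cofinite-sets step and the ``polynomial factor absorbed by base change'' lemma as the two technical ingredients, the latter stated as a small standalone claim since it is reused.
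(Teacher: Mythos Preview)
The paper's own proof is a two-sentence contrapositive: if no $f_j$ is \SP, then each $f_j$ is declared to be polynomially bounded, and a polynomially-indexed sum of polynomially bounded functions is again polynomially bounded, contradicting $g_m$ being \SP. The paper does not argue the implication ``not \SP\ $\Rightarrow$ polynomially bounded''; it simply treats these as exhausting the possibilities, which is the coarse reading under which the lemma is applied in Section~\ref{sec:Main}.

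You are right that this dichotomy does not follow from the literal definition, and your max-inequality $f_{j}(n)\ge g_m(n)/(k(m)+1)$ together with the base-shrinking observation is the correct quantitative ingredient. But your second route contains a genuine gap: you label the key step an ``intersection-of-cofinite-sets'' argument, yet the failure sets you produce are only \emph{infinite}, not cofinite. The statement ``$f_i$ is not \SP'' yields, for each fixed $(a,c,p)$, infinitely many $n$ with $f_i(n)<c\,a^{n^p}$ --- not all but finitely many such $n$. A finite family of merely infinite subsets of $\mathbb{N}$ can have empty intersection, so you cannot conclude there is a common $n$ at which all $f_i$ are simultaneously small. Indeed, under the literal definition the lemma is false: with $k(m)\equiv 1$, set $f_0(n)=2^{n^2}$ for even $n$ and $0$ for odd $n$, and let $f_1$ be the complementary function; then $f_0+f_1\equiv 2^{n^2}$ is \SP\ while neither summand is, since each vanishes infinitely often. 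So the gap cannot be closed by more bookkeeping; one has to adopt the paper's informal dichotomy, and once you do that the paper's two-line argument is all that is needed.
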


\begin{proof}
Consider some $m\in\mathbb{N}$ and suppose that there is no $j$, $0\leq j\leq k(m)$, such that $f_j$ is exponential or bigger than it on $m$. Thus, for every $j$, $f_j$ is polynomially bounded. Since any polynomial sum of polynomials is a polynomial too, we conclude that $g_m$, for this $m$, is polynomially bounded, that is a contradiction. 
  
  \end{proof}

%\pontoparada{REVISAO NO GRAMMARLY FEITA DO ABSTRACT ATE AQUI}

\section{Some useful properties in normal proofs in \mil}

\begin{definition} Let $\alpha$ be a formula in \mil, the abstract syntax tree of $\alpha$ is the tree $T_{\alpha}$ defined by induction as:
  \begin{itemize}
  \item If $\alpha$ is a propositional letter $A$ then $T_{A}=A$, and;
  \item If $\alpha$ is $\alpha_1\imply\alpha_2$ then $T_{\alpha}$ is
  \begin{tabular}{l}
      \Tree [.$\imply$ $T_{\alpha_{1}}$ $T_{\alpha_{2}}$ ]
  \end{tabular}
  \end{itemize}
\end{definition}

It is easy to see that, for every $\alpha$ in \mil,  $T_{\alpha}$ is a full binary tree\footnote{A binary tree, such that, every node that is not a leaf has two childrens}. For example, below we can find the abstract syntax trees of $((A\imply B)\imply A)\imply A)$ and $(A\imply B)\imply ((B\imply C)\imply (A\imply C))$ respectively.
\[
\begin{tabular}{ll}
      \Tree [.$\imply$ [.$\imply$ [.$\imply$ $A$ $B$ ] $A$ ] $A$ ]
      &
      \Tree [.$\imply$ [.$\imply$ $A$ $B$ ] [.$\imply$ [.$\imply$ $B$ $C$ ] [.$\imply$ $A$ $C$ ]]]
\end{tabular}
\]

We remind the reader that $A\imply (B\imply C)$ can be written as $A\imply B\imply C$ under the convention that the parenthesis are grouped to the right. We also observe that the tree representation seems to be fully adequate to be used here in this work on computational complexity analysis of the size of proofs. We only have to observe that for any $\alpha$ the lenght of $\alpha$ is, in general, larger than the number the subformulas of $\alpha$. When $\alpha$ has parenthesis, they hold positions that do not determine any subformula from $\alpha$. On the other hand, each node in $T_{\alpha}$ determines a subformula from $\alpha$. Thus, instead of estimating the size of a proof as a function of the lenght of the a formula $\alpha$, we take in to account the size of the abstract parsing tree $\size{T_{\alpha}}$, i.e., how many nodes it has. 

The height of a tree $T$, denoted by $h(T)$, is the the lenght of the longest path in $T$ from the root to any of its leaves. It is well-known that the height of a balanced tree with size $n$ is $\log(n)$. On the other hand, the longest path from the root to any leaf in a tree $T$ is bounded by $\floor{\frac{\size{T}}{2}}$. For example,  when $T$ is the abstract syntax tree of $a_1\imply (a_2\imply (a_3\ldots (a_n\imply b))\ldots)$, as shown in the tree in figure~\ref{fig:right-unbalenced}. Another thing to observe is that, because of the way the implications are nested by the right, we can consider the right-hand side of the rightmost implication as being a propositional variable. This is summarized in observation~\ref{obs:Height}

\begin{figure}
\begin{tabular}{l}
  \Tree [.$\imply$ $a_1$ [.$\imply$ $a_2$ [.$\imply$ $a_3$  [.$\vdots$ [.$\imply$ $a_{n}$ b ]]]]]
\end{tabular}
\caption{A totally right-unbalanced abstract syntax tree}\label{fig:right-unbalenced}
\end{figure}

\begin{figure}
\begin{tabular}{l}
  \Tree [.$\imply$ [.$\vdots$ [.$\imply$ [.$\imply$ $a_1$ $a_2$ ] $a_3$ ] ] $a_n$ ]
\end{tabular}
\caption{A totally left-unbalanced abstract syntax tree}\label{fig:left-unbalenced}
\end{figure}

\begin{observation}\label{obs:Height} For every tree $T$, $\floor{\log(\size{T})}\leq h(T)\leq \floor{\frac{\size{T}}{2}}$
\end{observation}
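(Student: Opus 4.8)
The plan is to prove the two inequalities $\floor{\log(\size{T})}\leq h(T)$ and $h(T)\leq\floor{\frac{\size{T}}{2}}$ separately, both by straightforward structural/counting arguments on binary trees, noting that here $T$ ranges over the abstract syntax trees of \mil-formulas, hence over \emph{full} binary trees (every internal node has exactly two children) — though in fact both bounds hold for arbitrary binary trees. For the upper bound $h(T)\leq\floor{\frac{\size{T}}{2}}$, I would fix a longest root-to-leaf path $v_0,v_1,\dots,v_{h(T)}$, where $v_0$ is the root and $v_{h(T)}$ a leaf. Since $T$ is a full binary tree, each internal node $v_0,\dots,v_{h(T)-1}$ on this path has a second child off the path (a sibling subtree with at least one node). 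This exhibits at least $h(T)$ nodes off the path in addition to the $h(T)+1$ nodes on it, but a cleaner bookkeeping is: pair each of the $h(T)$ internal path-nodes $v_i$ with the off-path sibling of $v_{i+1}$; these $2h(T)$ nodes are distinct (the $v_i$ lie on the path, the siblings do not, and the siblings are pairwise distinct as they sit in disjoint subtrees), so $\size{T}\geq 2h(T)$, giving $h(T)\leq\size{T}/2$, and since $h(T)$ is an integer, $h(T)\leq\floor{\frac{\size{T}}{2}}$.

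For the lower bound $\floor{\log(\size{T})}\leq h(T)$, I would use the standard fact that a binary tree of height $h$ has at most $2^{h+1}-1$ nodes, or more simply at most $2^{h}$ leaves and, counting by levels, $\size{T}\leq 2^{h(T)+1}-1<2^{h(T)+1}$. Actually, to match the stated form it is cleanest to argue $\size{T}\leq 2^{h(T)+1}-1$ by induction on $h(T)$ (level $i$ contains at most $2^i$ nodes for $0\le i\le h(T)$, and $\sum_{i=0}^{h(T)}2^i = 2^{h(T)+1}-1$), hence $\size{T}+1\le 2^{h(T)+1}$, so $\log(\size{T})\le\log(\size{T}+1)\le h(T)+1$; taking floors and being slightly careful with the boundary gives $\floor{\log(\size{T})}\leq h(T)$. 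Alternatively, and perhaps more transparently for the intended ``balanced tree of size $n$ has height $\log n$'' remark, one observes directly that $\size{T}\le 2^{h(T)+1}-1$ forces $2^{h(T)}\le\size{T}$ whenever... — here one must be a touch careful, so I would simply go with $\size{T}<2^{h(T)+1}$, take base-$2$ logs, subtract $1$, and take the floor.

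The only mild subtlety — and thus the ``hard'' part, such as it is — is the boundary/rounding behaviour of $\floor{\log(\cdot)}$ in the lower bound: one has $\size{T}\le 2^{h(T)+1}-1$, which gives $\log(\size{T})\le h(T)+1$ but not immediately $\floor{\log(\size{T})}\le h(T)$ since $\log(\size{T})$ could in principle equal $h(T)+1$ only when $\size{T}=2^{h(T)+1}$, which is excluded by the strict inequality $\size{T}<2^{h(T)+1}$; hence $\floor{\log(\size{T})}\le h(T)$. (For the single-node tree, $h(T)=0$ and $\floor{\log 1}=0$, so the base case checks out, and the convention that we exclude the null tree means $\size{T}\ge 1$ throughout.) Everything else is routine level-counting and the full-binary-tree structure of abstract syntax trees, so no real obstacle is expected.
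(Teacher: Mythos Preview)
Your proposal is correct and entirely standard. The paper does not actually supply a proof of this observation: it merely remarks beforehand that ``it is well-known that the height of a balanced tree with size $n$ is $\log(n)$'' and that the upper bound is attained by the totally right-unbalanced abstract syntax tree of $a_1\imply(a_2\imply\cdots(a_n\imply b)\cdots)$, then states the observation as a summary. Your argument is thus strictly more than what the paper offers; in particular, your sibling-pairing count for the upper bound (yielding $\size{T}\ge 2h(T)$, in fact $\ge 2h(T)+1$ once the leaf is included) and your level-counting bound $\size{T}\le 2^{h(T)+1}-1<2^{h(T)+1}$ for the lower bound are exactly the routine justifications one would give for these ``well-known'' facts, and your handling of the floor at the boundary is correct.
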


In any normal proof $\Pi$ of a formula $\alpha$,  the subformulas of it that can be top-formulas of some branch in $\Pi$ are antecedents of $\alpha$. We observe that any formula $\alpha$ in $\mil$ is of the form $\alpha_1\imply(\alpha_2\imply\ldots (\alpha_n\imply q)\ldots)$, with $q$ being a propositional variable. Moreover, the antecents of $\alpha$ are, hence, $\alpha_i$, $i=1,n$. These are the formulas that can be the top-formula of the main branch, due to the fact that in order to prove $\alpha$, some $\imply$-introductions can be needed and one of them has to discharge an assumption $\alpha_j$, for some $j$, that it is top-formula of the main branch. For the secondary branches, each of them is determined by an antecedent of the top-formula $\alpha_j$ of the main branch,  and, that is the minor premise of the correponding $\imply$-Elim application that has a subformula derived from $\alpha_j$, or itself in the topmost $\imply$-Elim application, as major premise.   

In virtue of theorem~\ref{theo:Normalization} and definition~\ref{def:Branch}, any branch in a normal proof has an Elim-part, where there are only $\imply$-Elim rule application rules, and an Intro-part, containing only $\imply$-Intro rule applications. The minimal formula in the branch is the conclusion of the last $\imply$-Elim rule and premisse of the first $\imply$-Intro rule, if any of them. This fact show us that the Elim-part of any branch in a proof of $T_{\alpha}$, i.e. a proof of $\alpha$, is a sequence of subformulas of the respective top-formula of this branch. Any  branch in $T_{\alpha}$ can be identified by its level and its top-formula. Moreover, the sequence of formulas of the E-part of this branch is the sequence of subformulas of the top-formula that appears in the abstract syntax tree $T_{\alpha}$. Thus, we have the following proposition.

\begin{proposition}\label{fact:UniqueTopmost}
  In a normal proof of $\alpha$, the E-part of any branch is uniquely  determined by the formula $\beta$ that is the major premiss of the topmost $\imply$-Elim rule application of this branch. The subtree $T_{\beta}$ of  $T_{\alpha}$ determines each formula occurrence in this E-part of the branch. These formulas are the sequence of right descendents of $\beta$ in $T_{\beta}$.
\end{proposition}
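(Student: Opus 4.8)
The plan is to read off the claim from the shape of the $\imply$-Elim rule as it repeats along the Elim-part of the branch. First I would invoke Theorem~\ref{theo:Normalization} to assume that the proof $\Pi$ of $\alpha$ is normal, and recall from the analysis of branches preceding this proposition (together with Definition~\ref{def:Branch}) that every branch $\beta_1,\ldots,\beta_k$ of a normal proof decomposes into an E-part $\beta_1,\ldots,\beta_j$ followed by an I-part $\beta_j,\ldots,\beta_k$, where $\beta_1$ is the top-formula of the branch, $\beta_j$ is its minimal formula, and every inference inside the E-part is an application of $\imply$-Elim in which $\beta_i$ (for $i<j$) is the major premiss and $\beta_{i+1}$ is the conclusion. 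In particular the topmost $\imply$-Elim application of the branch is the one producing $\beta_2$, and its major premiss is $\beta=\beta_1$, so that the $\beta$ named in the statement is exactly the top-formula of the branch.

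Next I would make the one local observation that does all the work: a single $\imply$-Elim step rigidly constrains two consecutive formulas of the E-part. If $\beta_i$ is the major premiss of an application
\[
\frac{A \quad A\imply B}{B}
\]
whose conclusion is $\beta_{i+1}$, then necessarily $\beta_i = \gamma_i \imply \beta_{i+1}$, where the minor premiss $\gamma_i$ (a subformula of $\beta_i$ distinct from $\beta_{i+1}$) is the top-formula of the secondary branch hanging from this step. Hence $\beta_{i+1}$ is precisely the consequent of $\beta_i$, i.e. in the abstract syntax tree the occurrence $\beta_{i+1}$ is the right child of the root of $T_{\beta_i}$. Iterating this by induction on $i$, starting from $\beta_1=\beta$, shows that $\beta_i$ is the node reached from the root of $T_{\beta}$ by following $i-1$ right edges; equivalently $\beta_1,\ldots,\beta_j$ is an initial segment of the sequence of right descendants of $\beta$ along $T_{\beta}$. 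This segment necessarily ends no later than the unique leaf on the rightmost path of $T_{\beta}$, which is a propositional variable and so cannot be the major premiss of an $\imply$-Elim. Since nothing in this description refers to any feature of $\Pi$ other than $\beta$ itself, the E-part of the branch is uniquely determined by $\beta$, and its formula occurrences are exactly $\beta$ and its right descendants in $T_{\beta}$, as claimed.

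I do not expect a genuine obstacle here: the argument is a routine structural analysis, and the form of $\imply$-Elim forces everything. The only points needing a little care are the identification of ``the topmost $\imply$-Elim application of the branch'' with the step whose major premiss is the top-formula $\beta_1$, and the fact that the E-part is a contiguous initial segment of the branch — but the latter is precisely the Elim-part/Intro-part decomposition already licensed by normality, so once it is cited the correspondence with the rightmost path of $T_{\beta}$ is immediate.
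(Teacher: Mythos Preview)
Your proposal is correct and follows exactly the line of reasoning the paper uses: the paper does not give a separate formal proof of this proposition but presents it as an immediate consequence of the preceding paragraph, which records that in a normal proof each branch splits into an E-part and an I-part and that along the E-part each step passes from a major premiss $\beta_i$ to its consequent $\beta_{i+1}$, so that the E-part traces the right-descendant path of the top-formula in $T_{\alpha}$. Your write-up simply makes this explicit by the obvious induction on $i$, and your extra care in noting that the E-part is an \emph{initial segment} of the rightmost path (stopping at the minimal formula, possibly before the propositional-variable leaf) is a harmless sharpening of the paper's somewhat loose wording.
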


A corollary of the proposition~\ref{fact:UniqueTopmost} is that there is one-to-one correspondence between  sub-formulas of $\alpha$ and possible E-parts of the branches that occurs in any normal proof of $\alpha$. This is stated by the following lemma.

\begin{lemma}\label{lemma:UniqueBranches}
  Let $\Pi$ be a normal proof of $\alpha$. Then each E-part of a branch in $\Pi$ is the sequence of formulas occurring in the right-branch of $T_{\beta}$, where $\beta$ is the major premise of the topmost Elim-rule application in the branch.
\end{lemma}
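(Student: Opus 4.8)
The plan is to prove Lemma~\ref{lemma:UniqueBranches} directly from Proposition~\ref{fact:UniqueTopmost} together with the structural analysis of branches in normal proofs that precedes it. First I would recall the setup: by Theorem~\ref{theo:Normalization} and Definition~\ref{def:Branch}, a branch $\beta_1,\ldots,\beta_k$ in a normal proof $\Pi$ of $\alpha$ splits into an E-part and an I-part separated by the minimal formula. In the E-part, $\beta_1$ is a top-formula, and for each consecutive pair $\beta_i,\beta_{i+1}$ in this part, $\beta_i$ is the \emph{major} premise of a $\imply$-Elim whose conclusion is $\beta_{i+1}$. Since in a $\imply$-Elim the conclusion is the (right-)subformula obtained by stripping the antecedent off the major premise, each step down the E-part removes exactly the outermost antecedent of the current formula. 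Hence if $\beta$ is the major premise of the \emph{topmost} $\imply$-Elim of the branch, then $\beta=\beta_1$ (the top-formula) and the E-part is precisely $\beta_1,\beta_2,\ldots$ where $\beta_{i+1}$ is $\beta_i$ with its leading antecedent deleted.

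The second step is to identify this sequence with a path in the abstract syntax tree $T_\beta$. Writing $\beta$ in the right-nested normal form $\beta = \sigma_1 \imply (\sigma_2 \imply \cdots \imply (\sigma_m \imply q)\cdots)$ with $q$ a propositional variable, the abstract syntax tree $T_\beta$ is the totally right-unbalanced tree of Figure~\ref{fig:right-unbalenced}: its root is $\imply$ with left child $T_{\sigma_1}$ and right child $T_{\sigma_2 \imply \cdots \imply q}$, and so on. The \emph{right branch} (the path following right children from the root) visits exactly the nodes whose subtrees are $\beta,\ \sigma_2\imply\cdots\imply q,\ \sigma_3\imply\cdots\imply q,\ \ldots,\ q$. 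By the previous paragraph, these are exactly the formula occurrences $\beta_1,\beta_2,\ldots$ of the E-part, in order. This is the content of Proposition~\ref{fact:UniqueTopmost}, which I would invoke to conclude that $T_\beta$ determines each formula occurrence of the E-part as the sequence of right descendants of $\beta$; the lemma is then just the restatement of this observation phrased in terms of the right branch of $T_\beta$.

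Finally I would note why $\beta$ is well-defined and why this characterizes E-parts: the E-part of a branch is nonempty and starts at a top-formula, and a top-formula in a normal proof cannot be the conclusion of any rule, so the topmost $\imply$-Elim in the branch (if the E-part has length $>1$) has $\beta_1$ itself as major premise; if the E-part is a single formula it is simultaneously the top-formula and the minimal formula and $T_\beta$ is a single node, so the statement holds trivially. The only mild subtlety — and the step I expect to need the most care — is the bookkeeping that the major premise of a $\imply$-Elim is always the argument that syntactically \emph{contains} the conclusion as its right subtree (as recorded in the Terminology section: "the conclusion of this rule, as well as its minor premise, are sub-formulas of its major premise"), so that descending the E-part really does correspond to walking down right edges of $T_\alpha$ rather than doing anything with the minor premises; once that is pinned down, the identification with the right branch of $T_\beta$ is immediate and the proof is complete.
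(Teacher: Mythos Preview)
Your proposal is correct and follows exactly the paper's approach: the paper presents Lemma~\ref{lemma:UniqueBranches} as an immediate corollary of Proposition~\ref{fact:UniqueTopmost} without giving a separate proof, and you do precisely this, invoking Proposition~\ref{fact:UniqueTopmost} and unpacking the identification between successive major premises in the E-part and the sequence of right descendants in $T_\beta$. If anything, your write-up supplies more detail than the paper itself offers.
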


%% We have to observe that the upper-bound on the lenght of E-parts in normal proofs provide by fact~\ref{fact:Height} can be tighter. From the observation~\ref{obs:Height} we know that the most that $h(T_{\alpha})$ can be is $\floor{\frac{\size{T}}{2}}$. Let us analyse the cases such that $\floor{\log(\size{T_{\alpha}})}<h(T_{\alpha})$ starting with $h(T_{\alpha})=\floor{\frac{\size{T}}{2}}$. These  are of the shape of figure~\ref{fig:right-unbalenced} and of figure~\ref{fig:left-unbalenced}. In both cases, the longest E-parts are of size 1. If $h(T_{\alpha})=\floor{\frac{\size{T}}{2}}$ then in both cases, i.e. figures~\ref{fig:right-unbalenced} and~\ref{fig:left-unbalenced}, $b$ and $a_i$, $i=1,n$, are all propositional variables. We examine both cases in the sequel. 

%% What we see in figure~\ref{fig:right-unbalenced} is a formula that
%% has no antecedent of degree more than 1, so none of them can be used as major premise in an $\imply$-Elim application to prove $\alpha$ by means of a normal proof. They are the only possible formulas to be used in a normal proof of $\alpha$, by the subformula principle and the shape of normal proofs. Thus, if $\alpha$ is a \mil tautology then any normal proof of it has only one branch with no $E$-part, so that $\alpha$ is obtained by a sequence of $\imply$-Intro applications.

%% In figure~\ref{fig:left-unbalenced}, we have the abstract syntax tree of a formula that has only 

One last thing to observe is that the amount of formulas in a normal proof of $\alpha$ that be major premise in a topmost Elim-rule application in a branch is upper-bounded by $\frac{\size{T_{\alpha}}}{3}$, since the major premise of a $\imply$-Elim rule must be of degree at least 3.

\section{Examining some huge normal proofs and their inherent redundancy}\label{sec:HugeProofs}

The following sets can be seen as examples of \SP\; sets of EOL-trees, whenever we  take Natural Deduction derivations in the format of EOL-trees.
%\pontoparada{Digitar a prova indutiva dos niveis de fibonacci e os outros aqui}

\subsection{A family of huge proofs in $\mil$: Fibonacci numbers}

We show here a family of formulas that have only huge proofs as least normal proofs. Consider the following formulas.
\begin{itemize}
\item $\eta = A_1\imply A_2$
\item $\sigma_k = A_{k-2}\imply (A_{k-1}\imply A_{k})$, $k>2$.
\end{itemize}

Any normal proof of $\Pi$ of $A_1\imply A_n$ from $\eta,\sigma_1,\ldots,\sigma_n$ is such that $\size{\Pi}\geq Fibonnaci(n)$.
\begin{prooftree}
\def\defaultHypSeparation{\hskip .05in}
\def\ScoreOverang{1pt}
\AxiomC{$[A_1]$}
\noLine
\UnaryInfC{$A_1\imply A_2$}
% \AxiomC{$A_2$}
% \AxiomC{$A_1$}
\noLine
\UnaryInfC{$A_1\imply(A_2\imply A_3)$}
\noLine
\UnaryInfC{$\Pi_3$}
%\BinaryInfC{$A_2\imply A_3$}
\noLine
\UnaryInfC{$A_3$}
\AxiomC{$[A_1]$}
\AxiomC{$A_1\imply A_2$}
\BinaryInfC{$A_2$}
\AxiomC{$A_2\imply(A_3\imply A_4)$}
\BinaryInfC{$A_3\imply A_4$}
\BinaryInfC{$A_4$}
\AxiomC{$[A_1]$}
\noLine
\UnaryInfC{$A_1\imply A_2$}
% \AxiomC{$A_2$}
% \AxiomC{$A_1$}
\noLine
\UnaryInfC{$A_1\imply(A_2\imply A_3)$}
\noLine
\UnaryInfC{$\Pi_3$}
%\BinaryInfC{$A_2\imply A_3$}
\noLine
\UnaryInfC{$A_3$}
\AxiomC{$A_3\imply(A_4\imply A_5)$}
\BinaryInfC{$A_4\imply A_5$}
\BinaryInfC{$A_5$}
\UnaryInfC{$A_1\imply A_5$}
\end{prooftree}

In general, for each $5\leq k$

\begin{prooftree}
\def\defaultHypSeparation{\hskip .05in}
\def\ScoreOverang{1pt}
\AxiomC{$[A_1]$}
\noLine
\UnaryInfC{$\eta$}
% \AxiomC{$A_2$}
% \AxiomC{$A_1$}
\noLine
\UnaryInfC{$\sigma_3,\ldots,\sigma_{k-1}$}
\noLine
\UnaryInfC{$\Pi_{k-1}$}
%\BinaryInfC{$A_2\imply A_3$}
\noLine
\UnaryInfC{$A_{k-1}$}
\AxiomC{$[A_1]$}
\noLine
\UnaryInfC{$\eta$}
% \AxiomC{$A_2$}
% \AxiomC{$A_1$}
\noLine
\UnaryInfC{$\sigma_3,\ldots,\sigma_{k-2}$}
\noLine
\UnaryInfC{$\Pi_{k-2}$}
%\BinaryInfC{$A_2\imply A_3$}
\noLine
\UnaryInfC{$A_{k-2}$}
\AxiomC{$A_{k-2}\imply (A_{k-1}\imply A_k)$}
\BinaryInfC{$A_{k-1}\imply A_k$}
\BinaryInfC{$A_k$}
\UnaryInfC{$A_1\imply A_k$}
\end{prooftree}

So we have that:
\begin{eqnarray*}
l(\Pi_2) & = & 1 \\
l(\Pi_3) & = & l(\Pi_2) + 1 \\
l(\Pi_k) & = & l(\Pi_{k-2})+l(\Pi_{k-1}) + 2 \\
\end{eqnarray*}
Thus, by a well-known fact about Fibonacci numbers, we see that:
\[
\frac{\phi^k}{\sqrt{5}}\approx Fibonacci(k)\leq l(\Pi_k)
\]
where $\phi=1.618$

From definitions in section~\ref{terminology} the set $Fib$ of all trees that correspond with the proofs of $A_1\imply yA_n$ is a \SP\; set of trees. A fascinating phenomenon, that happens with this set of huge proofs, concerns the main result of this article. Almost all of these proofs have levels where there are exponentially (or more) many repetitions of a formula, labelling a node of the underlying tree, and these formulas are conclusions of sub-trees that occurs exponentially (or more) many times in these underlying trees.

We have the following proposition on these derivations. 

\begin{proposition}\label{prop:fibonacci}
  Let $\Pi_n$ be the derivation of $A_n$, we drop-off the $\imply$-Intro last rule, from $\eta$ and $\sigma_k$, $k=3,n$. We have thus that
  $occ_{Fib}^{l,A_{n-l}}=Fibonacci(l+1)$, for $l=0,n-1$
\end{proposition}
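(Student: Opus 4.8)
The plan is to prove the proposition by induction on $l$, tracking the structure of the normal derivation $\Pi_n$ of $A_n$ (with its final $\imply$-Intro removed, so its open assumptions are $A_1$ together with $\eta,\sigma_3,\ldots,\sigma_n$). The key structural observation, which I would establish first, is a recursive decomposition: for $k>2$, the derivation $\Pi_k$ of $A_k$ ends with an $\imply$-Elim whose major premise is $A_{k-1}\imply A_k$, itself obtained by an $\imply$-Elim with major premise $\sigma_k = A_{k-2}\imply(A_{k-1}\imply A_k)$ and minor premise $A_{k-2}$ (derived by $\Pi_{k-2}$), and whose minor premise is $A_{k-1}$ (derived by $\Pi_{k-1}$). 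The base cases are $\Pi_1$, which is just the assumption $A_1$ (a single node), and $\Pi_2$, which is the $\imply$-Elim of $A_1$ against $\eta = A_1\imply A_2$. So as a tree, $\Pi_k$ has its root $A_k$ with two children subtrees, $\Pi_{k-1}$ (on the appropriate side, rooted at $A_{k-1}$) and the two-node-topped subtree rooted at $A_{k-1}\imply A_k$ which in turn has $\Pi_{k-2}$ (rooted at $A_{k-2}$) hanging off it.

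Next I would pin down at which levels the node labelled $A_{n-l}$ can occur. The claim $occ_{Fib}^{l,A_{n-l}} = Fibonacci(l+1)$ asserts that the number of vertices at level $l$ labelled $A_{n-l}$ equals $Fibonacci(l+1)$. From the decomposition, the root $A_n$ sits at level $0$, contributing $Fibonacci(1)=1$; that is the base case $l=0$. For the inductive step, I would count occurrences of $A_{n-l}$ at level $l$ in $\Pi_n$ by pushing through the decomposition: the subtree $\Pi_{n-1}$ is rooted at level $1$, so occurrences of $A_{n-l}$ at level $l$ in $\Pi_n$ coming from inside $\Pi_{n-1}$ are exactly occurrences of $A_{(n-1)-(l-1)}$ at level $l-1$ in $\Pi_{n-1}$; similarly the subtree $\Pi_{n-2}$ is rooted at level $2$, so its contribution is occurrences of $A_{(n-2)-(l-2)}$ at level $l-2$ in $\Pi_{n-2}$. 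Both target formulas are $A_{n-l}$, matching the label we are counting. One must also check that no node labelled $A_{n-l}$ appears at level $l$ among the "glue" nodes — the roots $A_n$, $A_{n-1}\imply A_n$, and $\sigma_n$ — which holds because for $l\ge 1$ these are at levels $0$ or $1$ with labels $A_n$, $A_{n-1}\imply A_n$, $A_{n-2}\imply(A_{n-1}\imply A_n)$, none equal to $A_{n-l}$ when $l\ge 2$, and for $l=1$ only $A_{n-1}$ at level $1$ inside $\Pi_{n-1}$'s root contributes. Then by the inductive hypothesis applied to $\Pi_{n-1}$ and $\Pi_{n-2}$, the count is $Fibonacci(l) + Fibonacci(l-1) = Fibonacci(l+1)$, completing the induction.

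The main obstacle I anticipate is bookkeeping the level offsets and the exact shape of the tree representation near the root, since the paper's EOL-tree encoding inserts extra nodes (e.g., for $\imply$-Elim the major premise $A_{n-1}\imply A_n$ is a genuine separate vertex, and $\sigma_n$ is a leaf) and one must be careful that these auxiliary vertices never coincidentally carry the label $A_{n-l}$ at the wrong level and never shift the level of $\Pi_{n-1}$ or $\Pi_{n-2}$ by an unexpected amount. A secondary subtlety is the boundary of the range $l=0,\ldots,n-1$: at $l=n-1$ the formula is $A_1$, which also occurs as the open assumption leaf in many places, so one must confirm the count $Fibonacci(n)$ is consistent with how many times the assumption $A_1$ appears at the deepest relevant level — this follows from the same recursion since $l(\Pi_k)$ and the leaf-$A_1$ count obey the Fibonacci recurrence, but it is worth stating explicitly. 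Once the decomposition and the level arithmetic are nailed down, the rest is a routine Fibonacci-recurrence induction.
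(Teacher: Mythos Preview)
Your proposal is correct and follows essentially the same argument as the paper: both use the recursive decomposition of $\Pi_n$ into $\Pi_{n-1}$ (rooted at level $1$) and $\Pi_{n-2}$ (rooted at level $2$), yielding the recurrence $occ^{l,A_{n-l}}(\Pi_n)=occ^{l-1,A_{n-l}}(\Pi_{n-1})+occ^{l-2,A_{n-l}}(\Pi_{n-2})$ and hence $Fibonacci(l)+Fibonacci(l-1)=Fibonacci(l+1)$. The only cosmetic difference is that the paper sets up the induction on $n$ (verifying all $l$ for each $\Pi_n$) whereas you phrase it as induction on $l$ (universally over all $\Pi_m$); your extra care about the ``glue'' vertices $A_{n-1}\imply A_n$ and $\sigma_n$ not colliding with the label $A_{n-l}$ is exactly what the paper sweeps under ``for $l=0,1,2$ it is straightforward.''
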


\begin{proof} of proposition~\ref{prop:fibonacci}
  
  By induction on $n$:
  
  (Basis) $n=1$, $\Pi_1=A_1$ and $occ_{Fib}^{o,A_1}(\Pi_1)=1=Fibonacci(1)$ and for $n=2$ we have that
  \begin{prooftree}
    \AxiomC{$A_1$}
    \AxiomC{$A_1\imply A_2$}
    \BinaryInfC{$A_2$}
  \end{prooftree}
  and hence $occ_{Fib}^{0,A_2}(\Pi_2)=1=Fibonacci(1)$ and $occ_{Fib}^{1,A_1}(\Pi_2)=1=Fibonacci(2)$
\\
  (Inductive step) Let $n>2$ and $0\leq l<n-1$. $\Pi_n$ is the following derivation:
\begin{prooftree}
\def\defaultHypSeparation{\hskip .05in}
\def\ScoreOverang{1pt}
\AxiomC{$\Pi_{n-1}$}
\noLine
\UnaryInfC{$A_{n-1}$}
\AxiomC{$\Pi_{n-2}$}
\noLine
\UnaryInfC{$A_{n-2}$}
\AxiomC{$A_{n-2}\imply (A_{n-1}\imply A_n)$}
\BinaryInfC{$A_{n-1}\imply A_n$}
\BinaryInfC{$A_n$}
\end{prooftree}
For $l=0$, $l=1$ and $l=2$ it is straightforward, consider $l>2$ then
\[
occ_{Fib}^{l,A_{n-l}}(\Pi_n)=occ_{Fib}^{l-1,A_{n-1-(l-1)}}(\Pi_{n-1}) + occ_{Fib}^{l-2,A_{n-2-(l-2)}}(\Pi_{n-2})
      \]
      what is, by inductive hypothesis:
      \[
      occ_{Fib}^{l,A_{n-l}}(\Pi_n)= Fibonacci((l-1)+1) + Fibonacci((l-2)+1) = Fibonacci(l)+Fibonacci(l-1)=Fibonacci(l+1)
        \]

\end{proof}

We also can prove by induction that $h(\Pi_n)=n$ and that $B(\Pi_n)=4\times(n-1)$, for $n>2$. Thus this set $Fib$ of proofs is a set of huge proofs, linear height bounded. By the proposition~\ref{prop:fibonacci} for almost all derivations in $Fib$, there is at least one sub-derivation that occurs exponentially (or more) many times in it. To see that, observe that almost all $\Pi_n$, which are of exponential size on $n$, in fact,  are of exponential-size on $\size{B(\Pi_n)}=4\times(n-1)$ too. Of course, all of $\Pi_n$ are normal derivations in $\mil$. 

\subsection{Proofs in $\mil$ that a non-Hamiltonian graph is not Hamiltonian}

In the appendix~\ref{appendix:hamiltonian} we remember that a well-known propositional coding of the hamiltonianicity of graphs can be used, in purely implicational propositional logic, to have a Natural Deduction proof for the non-hamiltonianicity of graphs, by negating the previous statement. For any non-hamiltonian graph $G$ the sentence $\neg\alpha_G$ is a certificate for its non-hamiltonianicity. The set of all derivations $\alpha_G$ for $G$ non-hamiltonian is a set of huge proofs. These derivations are linearly height bounded in $\mil$ as we demonstrate in the appendix. By counting formulas instead of symbols, if a simple non-hamiltonian graph $G$ has $n$ nodes, then the normal proof of $\alpha_G$ showing that it has no Hamiltonian cycle has $n^n$ formulas, in the worst case. Of course, the size of the normal proof depends on the topology of the graph. For the Petersen graph, for example, the size is approximately $10^{10}$. Petersen graph has ten nodes. We call this set of huge normal proofs $NHam$, for Non-hamiltonian.

$NHam$ also has almost all with sub-derivations that repeats polynomially many times in each of them. The reason for that relies on the fact that $NHam$ is the set of naive proofs of non-hamiltonianicity. The proofs consider all possible paths. In a graph of $n$ nodes, a (possible) path is any sequence of $n$ positions. The proof is an upside-down decision tree that checks the consistency of each possible path for being a correct. Well, we have many repetitions of sub-paths, so the proof reflects this repetition too. We invite the reader to see this in detail in the appendix. We included this material as a matter of completeness and illustration.

\subsection{Normal Proofs that need exponentially many assumptions}

In \cite{exponential} it is shown a family of purely implicational formulas $\xi_n$, such that, each of them needs $2^n$ assumptions of the same formula, we have to discharge at the end of the proof, in their proofs. Almost all of the normal proofs of $\xi_n$ are hence of exponential. All of them are linearly height bounded. This set of huge proofs, denoted by $Exp$ also has the property that almost every proof in it has sub-proofs that are polynomially many times repeated, in a level of the normal proof of $\xi_n$, let us say, in it. Here it is one more concrete example of what we prove in the next section and as the main result of the article.

\section{Huge proofs are redundant}
\label{sec:Main}

In this section, we show the main result of this article, roughly stated as ``Almost every linearly height-bounded huge proof is redundant''. We show that for any unlimited set $\mathcal{S}$ of $EOL$-trees if $\mathcal{S}$ is \SP\; then for almost all trees $T\in \mathcal{S}$, there is a subtree $T^{\prime}$ of $T$ that occurs exponentially (or more) many times at the same level in $T$. The formal statement of this assertion is lemma~\ref{lemma:Main}. The following lemma is an initial step in the proof of lemma~\ref{lemma:Main}. We need one more auxiliary definition before.

\begin{definition}\label{def:OCC}
  Let  $\mathcal{A}$ be a set of $EOL$-trees and $occ^{l,q}_{\mathcal{A}}$ as in definition~\ref{def:occ}. We define the function\linebreak  $OCC^{l,q}_{\mathcal{A}}:\mathbb{N}\longrightarrow\mathbb{N}$ as:
  \[
  OCC^{l,q}_{\mathcal{A}}(m)=Min(\{occ^{l,q}_{\mathcal{A}}(T):\mbox{$\card{(B(T)}= m\land (occ^{l,q}_{\mathcal{A}}(T)>0)$}\})
  \]
\end{definition}

A straightforward consequence of the definition above is that for every tree $T\in\mathcal{A}$, for every $m\in \mathbb{N}$, for every $l\leq h(T)$, if $\card{B(T)}=m$ then for every $q\in B(T)$, $OCC^{l,q}_{\mathcal{A}}(m)\leq occ^{l,q}_{\mathcal{A}}(T)$.

%\td{Fazer a Definicao abaixo ser uma funcao. A funcao e que tem corpotamento  polynomial}

\begin{observation}\label{obs:polypoints}
We first observe that for all trees, the first levels have few nodes, starting with level 1 that has at most 2 nodes, 2 with at most 4, and so on. That is, for every tree $T$, there is at least one level $i\leq h(T)$ and $0<p\in\mathbb{N}$, such that, for all sets $V^{i,q}(T)\leq \card{B(T)}^p$,   for each  $q\in B(T)$, with $V^{i,q}(T)\neq\emptyset$ obviously. 
\end{observation}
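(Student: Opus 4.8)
The plan is to establish \autoref{obs:polypoints} by a direct counting argument on the shape of a (possibly incomplete) binary tree, combined with the fact that the labelling set $B(T)$ has at least two elements when the alphabet is reasonable. First I would recall the elementary bound: in any binary tree the number of nodes at level $i$ is at most $2^i$, since each node has at most two children (here, each node has either one $E_U$-child or one $E_L$-child together with one $E_R$-child, so the branching factor is bounded by $2$). Consequently, for any fixed level $i$ and any label $q\in B(T)$ we have $\card{V^{i,q}(T)}\le \card{V_i(T)}\le 2^i$, where $V_i(T)$ denotes the set of all nodes at level $i$.

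Next I would argue that there must be a ``small'' level in the sense required. Set $n=\card{B(T)}$; since the alphabet is reasonable, $n\ge 2$, so in particular $n^1\ge 2$. Take $i=1$ and $p=1$: level $1$ has at most $2\le n = \card{B(T)}^p$ nodes, hence for every $q\in B(T)$ we trivially get $\card{V^{1,q}(T)}\le 2\le \card{B(T)}^p$. If $h(T)\ge 1$ this level exists and we are done with $i=1$, $p=1$. The only degenerate case is $h(T)=0$, i.e. the tree consists of a single node (the root); then level $0$ has exactly one node, and again $\card{V^{0,q}(T)}\le 1\le \card{B(T)}^1$, so $i=0$, $p=1$ works. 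In either case we have exhibited a level $i\le h(T)$ and a $p>0$ with the desired property, and the observation follows.

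I would also note, for the sake of the later use of this observation, the slightly stronger informal point made in its statement: not merely level $1$ but an initial segment of levels is ``polynomial'', since level $j$ has at most $2^j$ nodes, so as long as $2^j\le \card{B(T)}^p$ — which holds for all $j\le p\log_2\card{B(T)}$ — every fiber $V^{j,q}(T)$ is polynomially bounded. This is the quantitative remark that will be paired with the super-exponential pigeonhole principle (\autoref{lemma:SPPHP}) in the proof of the main lemma: the deep levels, not the shallow ones, are where the exponential repetition is forced. The argument here is entirely routine; there is no real obstacle, the only thing to be careful about is the boundary case of the trivial one-node tree and the implicit convention that $B(T)$ is nonempty and has at least two symbols, which is guaranteed by the reasonable-alphabet assumption stated just before \autoref{def:FuncaoF}.
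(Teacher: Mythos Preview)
Your proposal is correct and follows exactly the route the paper itself takes: the observation is not given a separate proof in the paper, and its justification is the very sentence embedded in the statement (``level 1 has at most 2 nodes, 2 with at most 4, and so on''), i.e., the elementary bound $\card{V_i(T)}\le 2^i$ for a binary tree, which you spell out in full. One small inaccuracy worth flagging: the ``reasonable alphabet'' assumption in the paper concerns the encoding alphabet for strings, not the label set $B(T)$, so it does not literally guarantee $\card{B(T)}\ge 2$; however this is harmless, since taking $i=0$ (the root level, with exactly one node) always yields $\card{V^{0,q}(T)}\le 1\le \card{B(T)}^{1}$ and covers every case uniformly.
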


\begin{definition}\label{def:MaxOCC}
  Considering the conditions in definition~\ref{def:OCC}, we define:
  \[
  OCC^{l}_{\mathcal{A}}(m)=Max(\{OCC^{l,q}_{\mathcal{A}}(m):\mbox{$\card{B(T)}=m$ $\land$ $q\in B(T)$}\}
    \]
    \end{definition}

In contrats with the observation above,  huge proofs have levels $i$, such that $V^{i,q}(T)$, for some $q\in B(T)$, is lower-bounded by an exponential or faster growing function. The proof of this resembles the proof of lemma~\ref{lemma:SPPHP}.

\begin{lemma}\label{lemma:OCCSP}
Let  $\mathcal{A}$ be the set of all $EOL$-trees and $\Phi(T)$ the predicate that holds only when $T$ is linearly height-bounded, on $\card{B(T)}$, tree.
    If $\Phi(\mathcal{A})$ is \SP\; then $OCC^{l}_{\Phi(\mathcal{A})}$ is \SP\;, on the argument ($m$ in the definition~\ref{def:OCC}) that bounds $\card{B(T)}$ for almost all $l\in\mathbb{N}$.

  \end{lemma}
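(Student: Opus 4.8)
The plan is to reduce Lemma~\ref{lemma:OCCSP} to the Super-Exponential Pigeonhole Principle (Lemma~\ref{lemma:SPPHP}) by expressing the size of a tree $T$ (coded in a reasonable alphabet) as a sum, over the levels $0\le i\le h(T)$, of the contributions of each level. First I would fix $\mathcal{A}$ as the set of all $EOL$-trees and $\Phi$ as the predicate ``$T$ is linearly height-bounded on $\card{B(T)}$'', so that by hypothesis $\Phi(\mathcal{A})$ is \SP\ and, by Proposition~\ref{prop:FuncaoF}, $F_{\mathcal{A},\Phi}$ is exponential or bigger. The key quantitative observation is that for a tree coded as a string, $\size{T}$ is bounded above by a polynomial (indeed linear, up to the per-node label cost) in the total number of nodes $\card{V(T)}=\sum_{i\le h(T)}\sum_{q\in B(T)}\size{V_T^{i,q}}=\sum_{i\le h(T)}\sum_{q\in B(T)}occ^{i,q}_{\mathcal{A}}(T)$. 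Since $T$ is linearly height-bounded, $h(T)\le k\cdot\card{B(T)}$ for a fixed constant $k$, so the outer sum has at most $k(m)$-many terms where $k(\cdot)$ is (a fixed multiple of) the identity — a polynomial in $m=\card{B(T)}$ — and the inner sum has at most $m$ terms; bundling the inner sum into a single function still leaves a sum of polynomially-many (in $m$) functions of $m$.

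The next step is to set up the functions to which Lemma~\ref{lemma:SPPHP} applies. For each level $i$ let $f_i(m)$ be the minimal value, over trees $T\in\Phi(\mathcal{A})$ with $\card{B(T)}=m$ and $i\le h(T)$, of $\sum_{q\in B(T)}occ^{i,q}_{\mathcal{A}}(T)$ — i.e. the least possible number of nodes sitting at level $i$; intuitively $f_i(m)$ is closely related to $\max_q OCC^{i,q}_{\Phi(\mathcal{A})}(m)=OCC^i_{\Phi(\mathcal{A})}(m)$ up to the factor $m$ (since $\sum_{q}occ^{i,q}\le m\cdot\max_q occ^{i,q}$ and conversely $\max_q occ^{i,q}\le\sum_q occ^{i,q}$). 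Then the minimal tree witnessing $F_{\mathcal{A},\Phi}(m)$ has size at least a constant multiple of $\sum_{i\le k(m)}f_i(m)=:g_m(m)$, and since $F_{\mathcal{A},\Phi}$ is exponential or bigger, so is $g_m$ on $m$. Applying Lemma~\ref{lemma:SPPHP} with the polynomial $k$, we obtain for each $m$ some level $j=j(m)\le k(m)$ with $f_j$ \SP\ on $m$; dividing by the polynomial factor $m$ (which preserves being exponential or bigger, since $a^{n^p}/n$ still dominates any exponential $b^{n}$) yields $OCC^{j}_{\Phi(\mathcal{A})}$ \SP\ on $m$.

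The remaining work is to upgrade ``for each $m$ there is a level $j(m)$'' to ``for almost all $l$'', which is what the statement asks. Here I would argue that the ``bad'' levels — those $l$ for which $OCC^l_{\Phi(\mathcal{A})}$ is only polynomially bounded — cannot be cofinitely many, and in fact the function $l\mapsto$ (the growth rate at level $l$) is eventually forced to be \SP: by Observation~\ref{obs:polypoints} the first levels are necessarily small (polynomially many nodes), so these genuinely are polynomial, but the exponential mass of the tree has to be deposited somewhere, and because the height is only linear in $m$ the exponential total cannot be spread thinly over the polynomially-many levels without some fixed level — and, by a monotonicity/accumulation argument on where the branching happens, all sufficiently deep levels up to $h(T)$ — carrying an \SP\ number of nodes. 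Concretely, if only finitely many levels were \SP, the sum $g_m$ would be a finite sum of polynomials plus polynomially-many polynomials, hence polynomial, contradicting that $g_m$ is exponential or bigger; pushing this from ``infinitely many'' to ``almost all'' uses that $OCC^l$ is (weakly) increasing in $l$ up to the height in the relevant regime, which follows from the fact that a node at level $l$ forces at least one descendant at level $l+1$ as long as $l<h(T)$.

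\textbf{Main obstacle.} The delicate point is exactly this last paragraph — passing from an existential statement about levels (one good level per $m$, as Lemma~\ref{lemma:SPPHP} literally gives) to the ``almost all $l$'' formulation, and justifying the implicit monotonicity of $OCC^l_{\Phi(\mathcal{A})}$ in $l$. One must be careful that the minimum in Definition~\ref{def:OCC} is taken over a family of trees that may change with $l$, so ``more nodes at level $l$'' does not syntactically give ``more nodes at level $l+1$'' for the \emph{same} minimizing tree; handling this correctly — presumably by restricting attention to levels below the height and invoking that the $\mathcal{E}\mathcal{B}$ witnesses can be chosen with enough structure — is where the real care is needed. The reduction to Lemma~\ref{lemma:SPPHP} itself and the polynomial-bookkeeping (coding size vs.\ node count, dividing out the factor $m$) I expect to be routine.
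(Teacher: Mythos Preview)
Your approach is essentially the same as the paper's: decompose $\size{V(T)}$ as the double sum $\sum_{i\le h(T)}\sum_{q\in B(T)}\size{V_T^{i,q}}$, use the linear height bound to make the outer index range over at most $k\cdot m$ many levels, observe that $F_{\mathcal{A},\Phi}(m)$ lower-bounds this sum, and then invoke the Super-Exponential Pigeonhole Principle (Lemma~\ref{lemma:SPPHP}) to extract a level (and label) whose count is itself exponential. Your bookkeeping with the factor $m$ when passing from $\sum_q$ to $\max_q$ is exactly right and the paper leaves this implicit.

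Concerning your ``main obstacle'': you are right to flag it, but you should know that the paper's own proof does \emph{not} carry out this upgrade either. The paper's argument ends with the existential conclusion --- for all sufficiently large $m$ there exist $l\le k\cdot m$ and $q$ with $a^{cm}\le OCC^{l}_{\Phi(\mathcal{A})}(m)$ --- and simply asserts ``so $OCC^{l}_{\Phi(\mathcal{A})}$ is \SP'' without any monotonicity argument or passage to ``almost all $l$''. In the subsequent uses of the lemma (e.g.\ in Lemma~\ref{lemma:Main}) only this existential version is actually invoked: one needs \emph{some} level $i$ at which exponentially many nodes carry the same label, and then one chooses the least such $i$. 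So your first two paragraphs already match what the paper proves and uses; the third paragraph addresses a stronger claim than the paper itself establishes, and your caution about the non-monotonicity of the minimum in Definition~\ref{def:OCC} across different minimizing trees is well-founded.
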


\begin{proof} of lemma~\ref{lemma:OCCSP}
  Since $\Phi(\mathcal{A})$ is \SP\; then by proposition~\ref{prop:FuncaoF} $F_{\mathcal{A},\Phi}$ is a exponential or fatster growing function from $\mathbb{N}$ into $\mathbb{N}$. We observe that, for any tree $T$,
\[
\size{V(T)}=\size{\bigcup_{i\leq h(T)}\bigcup_{q\in B(T)} V_{T}^{i,q} }
\],
hence, we have that
\[
\size{V(T)}=\sum_{i\leq h(T)}\size{\bigcup_{q\in B(T)} V_{T}^{i,q} }=\sum_{i\leq h(T)}\sum_{q\in B(T)} \size{V_{T}^{i,q}}
\]
The above equality holds for $T\in\Phi(\mathcal{A})$ and for all $m$, such that $\card{B(T)}= m$. Thus, we have
$\size{V(T)}\geq \size{Min_{\mathcal{S}}(\Phi_{\mathcal{S},m})}$. By the fact that $h(T)$ is linearly bounded by $\card{B(T)}$ we have that $h(T)$ is linearly bounded by $m$. Let $k\times m$ be such bound. Hence, we have that:
\[
F_{\mathcal{A},\Phi}(m)\leq \sum_{i\leq h(T)}\sum_{q\in B(T)} \size{V_{T}^{i,q}} = \sum_{i\leq k(m)}\sum_{q\in B(T)} \size{V_{T}^{i,q}}
\]
for almost all $m$, and  almost all $T$. In the right-hand side (RHS) of the last equation we have to remember
$\card{B(T)}= m$, so the inner summation is linearly bounded by $m$.  Moreover, by hypothesis, $\Phi(\mathcal{A})$ is \SP. Thus, $F_{\mathcal{A},\Phi}$ is \SP~and there are $a\in\mathbb{Q}$, $a>1$, and, $c\in\mathbb{Q}$, $c>0$, such that:
\[
a^{cm}\le \sum_{i\leq k(m)}\sum_{q\in B(T)} \size{V_{T}^{i,q}}
\]
for almost all $m$ and all $T$, such that $\card{B(T)}=m$. The above RHS can be seen as a function from $\Phi(\mathcal{A})\times\mathbb{N}$ into $\mathbb{N}$. If for almost all $m$, for all $l\leq k\times m$ and $q\in B(T)$, with $\card{B(T)}\leq m$ we have that $V_{T}^{i,q}$ is polynomially bounded then we have that the mentioned RHS is polynomially bounded also. Hence, there must be $i\leq k\times m$ and $q\in B(T)$, with $\card{B(T)}=m$, such that $\size{V_{T}^{i,q}}>a^{cm}$. Thus, $occ^{l,q}_{\mathcal{A}}(T)>a^{cm}$, for almost all $m$ and all $T$ with $\card{B(T)}=m$. Then, by definition~\ref{def:OCC}, there is $m_0$, such that for all $m>m_0$, there are $k,l$, $l\leq k\times m$, such that
$a^{cm}\le OCC^{l}_{\Phi(\mathcal{A})}(m)$. So $OCC^{l}_{\Phi(\mathcal{A})}$ is \SP\ too.
\end{proof}

The above lemma is, indeed, stronger than what we need to show that almost all normal proof is redundant.
However, from it we can start our reasoning on the intrinsec redundancy of huge proofs. It states that any huge proof $T$ has a level $i\leq h(T)$ and a label $q\in B(T)$ such $a^{cm}\leq \size{V_T^{i,q}}$, for $m=\card{B(T)}$, and some $a>1$ and $c>0$, $a,c\in\mathbb{Q}$. In fact, there is an $m_0$, such that for all $m>m_0$, the previous statement holds. 

We know that $a^{cm}=2^{cm\log(a)}$
Before we prove the main result of this article,  we need one more lemma. 

\begin{lemma}\label{lemma:premises}
  Let $q$ be a label node in a $EOL$-tree $T=\langle V,E_{L}\cup E_{R}\cup E_{U},l,B\rangle$ derived from a normal proof of some formula $\alpha$, such that $B(T)$ is the set of formulas associated to each subtree (non-leaf) $T_{\alpha}$.  Let
\[
Label^{2}_{suc}(q)=\{(c,d):\mbox{$\left<u,v_1\right>\in E_{L}$, $\left<u,v_2\right>E_{R}$, $l(u)=q$, $l(v_1)=c$ or $l(v_2)=d$}\}
\]
be as above. Then $\size{Label^{2}_{suc}}<\frac{1}{3}\times\card{B(T)}$
\end{lemma}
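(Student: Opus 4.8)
The plan is to bound $\size{Label^{2}_{suc}(q)}$ by counting, in the abstract syntax tree $T_{\alpha}$, how many nodes $u$ with $l(u)=q$ can possibly have left/right children, and then how many distinct pairs $(c,d)$ of child-labels such nodes can contribute. First I would recall the structural facts about $EOL$-trees derived from normal proofs already established in the excerpt: by item~\ref{item:one-to-one} of Definition~\ref{EOLTree}, whenever $\langle v,v_1\rangle\in E_L$ and $\langle v,v_2\rangle\in E_R$, the label $l(v_2)$ is strictly above $l(v)$ in $\prec_B$, and the pair $(l(v),l(v_2))$ uniquely determines $l(v_1)$ (the unique $b$ with $b\odot l(v)=l(v_2)$). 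Translating to the formula reading: a node labelled $q$ that is the conclusion of a $\imply$-Elim has as its major premise some $\beta$ with $q$ a proper subformula of $\beta$ — in fact $\beta$ has the form $c\imply q$ — and the minor premise $c$ is then forced. So the set of possible pairs $(c,q)$ arising at $\imply$-Elim nodes labelled $q$ injects into the set of formulas $\beta\in B(T)$ of the form $c\imply q$, i.e.\ into a subset of $B(T)$.

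The key step is the degree/counting argument already flagged in the text just before Proposition~\ref{fact:UniqueTopmost} and again at the end of Section~3: the major premise of a $\imply$-Elim must have degree at least $3$ (it is $c\imply q$ with $q$ itself an implication or at least a non-trivial subformula), and $B(T)$ consists of the formulas attached to the \emph{non-leaf} subtrees of $T_\alpha$, so $\card{B(T)}$ counts internal nodes of $T_\alpha$. Each formula $\beta = c\imply q$ that can serve as a major premise ``uses up'' at least three nodes of $T_\alpha$ (the root of $T_\beta$, plus the subtrees $T_c$ and $T_q$), and distinct such $\beta$ correspond to distinct subtrees, hence to disjoint... — more carefully, the map $\beta\mapsto$ (root node of $T_\beta$ in $T_\alpha$) is injective, and we need that the number of $\beta$'s of degree $\geq 3$ is at most $\tfrac13\card{B(T)}$. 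This is exactly the bound ``the amount of formulas\ldots that can be major premise\ldots is upper-bounded by $\frac{\size{T_\alpha}}{3}$'' restated with $\size{T_\alpha}$ replaced by $\card{B(T)}$ (the internal-node count), so I would make that substitution explicit and sharpen $\leq$ to $<$ using that not \emph{every} internal node of $T_\alpha$ can be the root of such a $\beta$ (the outermost ones / the atomic-consequent positions are excluded), giving the strict inequality.

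Finally I would assemble: $Label^{2}_{suc}(q)$ as defined is the set of pairs $(c,d)$ such that \emph{some} node $u$ with $l(u)=q$ has left child labelled $c$ \emph{or} right child labelled $d$; since the right child $d$ (the major premise) ranges over formulas of the form $c\imply q$ and the left child $c$ is then determined by $d$ and $q$, the projection onto the $d$-coordinate is injective on the relevant pairs, so $\size{Label^{2}_{suc}(q)}$ is at most the number of formulas in $B(T)$ of the form $\,\cdot\imply q$, which is at most the number of major-premise-eligible formulas, which is $<\tfrac13\card{B(T)}$.

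The main obstacle I expect is the slightly awkward phrasing of the definition of $Label^{2}_{suc}(q)$ — the ``$l(v_1)=c$ \textbf{or} $l(v_2)=d$'' disjunction, together with the free/unbound use of $T_\alpha$ versus $T_{\alpha}$'s subtrees in the statement — so the real work is to pin down exactly which pairs are being counted and argue that, despite the disjunction, the count is still controlled by the single parameter $d$ (equivalently by $\beta$, the major premise), because in a genuine $EOL$-tree from a normal proof the left label is never free once $q$ and the right label are fixed. Once that is nailed down, the arithmetic is just the degree-$\geq 3$ observation reused verbatim.
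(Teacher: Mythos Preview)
Your proposal is correct and follows essentially the same line as the paper's own proof: first use item~\ref{item:one-to-one} of Definition~\ref{EOLTree} to argue that, at any binary node labelled $q$, the right-child label $d=l(v_2)$ (the major premise $c\imply q$) uniquely determines the left-child label $c=l(v_1)$, so that $\size{Label^{2}_{suc}(q)}$ is bounded by the number of admissible major-premise labels; then bound that number by $\tfrac{1}{3}\card{B(T)}$. The only cosmetic difference is that the paper obtains the $\tfrac{1}{3}$ factor by invoking Lemma~\ref{lemma:UniqueBranches} (counting possible E-part branches through $q$), whereas you invoke directly the degree-$\geq 3$ observation stated at the end of Section~3; these are two phrasings of the same counting, and your handling of the awkward ``or'' in the definition of $Label^{2}_{suc}(q)$ is exactly the reading the paper intends.
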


%\todo[linecolor=red,backgroundcolor=red!25,bordercolor=red]{Tem que mudar a relacaoo dos labels na def de $EOL$-tree}
\begin{proof} of lemma~\ref{lemma:premises}
  By the definition of $EOL$-trees, whenever we have three nodes $u,v_1,v_2$, such that, $\left<u,v_1\right>\in E_L$ and $\left<u,v_2\right>\in E_R$, we have that $l(u)\prec l(v_2)$ and $l(v_1)$ is the unique label such $l(v_1)\odot l(u)=l(v_2)$. Since $T$ is derived from a normal proof, $l(v_2)$ is $l(v_1)\imply l(u)$ and by lemma~\ref{lemma:UniqueBranches} there are at most  $\frac{1}{3}\times\card{B(T)}$ different possible branches from any $u$ the the topmost associated to the major premise of the topmost binary rule in $T$. 
\end{proof}

Using the above lemma we can draw the following result.

\begin{lemma}\label{lemma:BinaryExpo}
  Let $\mathcal{S}$ be a set of $EOL$-trees. Suppose that there are $a\in\mathbb{R}$, $a>1$, $n_0,p\in\mathbb{N}$, $c\in\mathbb{Q}$, $p>1$, $c>0$, such that, $\forall n>n_0$, for all $T\in\mathcal{S}$, If $\size{B(T)}=n$ then there is 
a level $i>0$ in $T$ and a label $q\in B(T)$, such that, $\size{V_T^{i,q}}\ge c\times a^{n^p}$, i.e., $V_T^{i,q}$ has exponentially many elements on $\size{B(T)}$. Thus, there is at least one pair of labels $l_1,l_2\in B(T)$, such that $l_2=l_1\odot q$, and $\size{V_T^{i+1,l_1}}=\size{V_T^{i+1,l_2}}\ge (a^{\frac{a-1}{a}})^{n^p}$. Moreover,
  for every $u_1\in V_T^{i+1,l_1}$ there is only one $u_2\in V_T^{i+1,l_2}$ and only one $u\in V_T^{i,q}$, such that, $\langle u,u_1\rangle\in E_L(T)$ and $\langle u,u_2\rangle\in E_{R}(T)$.
\end{lemma}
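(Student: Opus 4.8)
The plan is to start from the node set $V_T^{i,q}$ of size at least $c\times a^{n^p}$ and push the count one level down. Each $u\in V_T^{i,q}$ is the conclusion of a binary ($\imply$-Elim) rule application, so by the $EOL$-tree axioms (items~\ref{item:one-to-one} of Definition~\ref{EOLTree}) it has exactly one $E_L$-child and one $E_R$-child, whose labels $c,d$ satisfy $d=c\odot q$ — equivalently, since $T$ comes from a normal proof, $d$ is the formula $c\imply q$. Thus each $u$ determines a pair $(c,d)\in Label^2_{suc}(q)$, and the map $u\mapsto(c,d)$ partitions $V_T^{i,q}$ into at most $\size{Label^2_{suc}(q)}$ classes. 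By Lemma~\ref{lemma:premises} there are fewer than $\frac{1}{3}\card{B(T)}=\frac{n}{3}$ such classes. By the pigeonhole principle, at least one class — fixing one pair $(l_1,l_2)$ with $l_2=l_1\odot q$ — contains at least
\[
\frac{c\times a^{n^p}}{\tfrac{n}{3}} = \frac{3c}{n}\,a^{n^p}
\]
nodes $u$.

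Next I would count the children. For the chosen class, collect all the $E_L$-children; these all carry label $l_1$, so they lie in $V_T^{i+1,l_1}$, and likewise the $E_R$-children lie in $V_T^{i+1,l_2}$. The one-to-one clauses of Definition~\ref{EOLTree} (a node has at most one $E_L$-child, at most one $E_R$-child, and in a tree each node has a unique parent edge of each type) give that $u\mapsto u_1$ and $u\mapsto u_2$ are injective on this class and that $u_1$ recovers $u$ and $u_2$. Hence $\size{V_T^{i+1,l_1}}$ and $\size{V_T^{i+1,l_2}}$ are both at least $\frac{3c}{n}a^{n^p}$, and the claimed bijective correspondence between $V_T^{i+1,l_1}$, $V_T^{i+1,l_2}$ and the relevant subset of $V_T^{i,q}$ holds as stated. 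This also gives $\size{V_T^{i+1,l_1}}=\size{V_T^{i+1,l_2}}$ on the matched subset.

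It remains to absorb the $\frac{3c}{n}$ factor into the stated bound $(a^{(a-1)/a})^{n^p}=a^{\frac{a-1}{a}n^p}$. Write $a^{n^p}=a^{\frac{a-1}{a}n^p}\cdot a^{\frac{1}{a}n^p}$; then it suffices to check $\frac{3c}{n}\,a^{\frac{1}{a}n^p}\ge 1$ for all large $n$, which is immediate since $a>1$, $p>1$ force $a^{\frac{1}{a}n^p}$ to dominate any polynomial in $n$, so for $n$ past some threshold $n_1\ge n_0$ the inequality holds. I would record that the conclusion is therefore valid "for almost all $n$," consistently with how the surrounding lemmas are phrased. The main obstacle, and the only genuinely delicate point, is the bookkeeping that ties the $EOL$-tree axioms to the "normal proof" reading — namely justifying that the $E_L/E_R$ children are uniquely labelled by a single pair $(l_1,l_2)$ rather than merely constrained — but this is exactly what item~\ref{item:one-to-one} of Definition~\ref{EOLTree} and Lemma~\ref{lemma:premises} were set up to supply, so the argument is essentially a careful pigeonhole plus a crude exponential-beats-polynomial estimate.
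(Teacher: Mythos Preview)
Your proposal is correct and follows essentially the same approach as the paper: pigeonhole over the at most linearly many possible label pairs for the children (via Lemma~\ref{lemma:premises}), then absorb the resulting polynomial denominator into the exponential to recover the base $a^{(a-1)/a}$. The only cosmetic differences are that you use the bound $n/3$ where the paper (inconsistently with its own Lemma~\ref{lemma:premises}) uses $n/2$, and that your final estimate splits $a^{n^p}=a^{\frac{a-1}{a}n^p}\cdot a^{\frac{1}{a}n^p}$ directly rather than rewriting $n=a^{\log_a n}$ and bounding $\log_a n<n^p/a$; both routes are the same ``exponential beats polynomial'' observation.
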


\begin{proof} of lemma~\ref{lemma:BinaryExpo}
  By the hypothesis of the lemma,  we have that for all $n>n_0$, and, for every $T\in\mathcal{S}$, such that $\size{B(T)}=n$, there are $i$ and $q$, such that, $\size{V_T^{i,q}}\ge c\times a^{n^p}$, with $a>1$, $c>0$, $p>1$. By lemma~\ref{lemma:premises}, there are at most $\frac{1}{2}\times\card{B(T)}=\frac{1}{2}\times n$ possible different pairs of labels, i.e., the sets $Left=\{v:\mbox{$\langle u, v\rangle\in E_L(T)$ and $u\in V_{T}^{i,q}$}\}$ and  $Right=\{v:\mbox{$\langle u, v\rangle\in E_R(T)$ and $u\in V_{T}^{i,q}$}\}$ have the same cardinality :
  \[
  \frac{\size{V_{T}^{i,q}}}{\frac{n}{2}}
  \]
  This is lower-bounded by $\frac{c\times a^{n^p}}{\frac{n}{2}}$, that is the same of $(2c)\frac{a^{n^p}}{n}$, that is equal to $(2c)\frac{a^{n^p}}{a\log_{a}n}=(2c)a^{(n^p-\log_{a}n)}$. Since, for all $n>a$, we have that $n^a>a^n$, thus
  $\log_{a}n^{a}<\log_{a}a^n=n<n^p$, and hence, $\log_{a}n<\frac{n^p}{a}$. Thus $n^p-\log_{a}n>n^p-\frac{n^p}{a}$, and finally
  \[
  \frac{\size{V_{T}^{i,q}}}{\frac{n}{2}}>(2c)a^{(n^p-\log_{a}n)}>(2c)a^{(n^p-\frac{n^p}{a})}=(2c)a^{\frac{a-1}{a}n^p}=(2c)(\sqrt[a]{a}^{a-1})^{n^p}
  \]
  We observe that as $a>1$ then $\sqrt[a]{a}^{a-1}>1$. 
\end{proof}
  Finally, by the last observation above, we can conclude that if there is an unlimited set of trees that each tree has a level and a label that repeats exponentially often, then in the levels above this levels there are labels that repeat exponentially often too.  We have, hence, the main lemma below. 

\begin{lemma}[Every linearly height bounded huge tree is inherently redundant]\label{lemma:Main}
  Let  $\mathcal{A}$ be the set of all $EOL$-trees and $\Phi(T)$ the predicate that holds only when $T$ is of linear height, on $\card{B(T)}$, bounded tree.
    If $\Phi(\mathcal{A})$ is \SP\; then, for almost all $T$, such that $\Phi(T)$, there is a sub-tree $T^{\prime}$ of $T$, and a level $i\leq h(T)$, such that $occ^{i,l(r(T^{\prime}))}$ is \SP, and there is no level $j<i$ such that $occ^{j,q}$ is \SP; for any $q\in B(T)$.
\end{lemma}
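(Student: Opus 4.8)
The plan is to bootstrap from Lemma~\ref{lemma:OCCSP} and Lemma~\ref{lemma:BinaryExpo}, using Lemma~\ref{lemma:SPPHP} to extract a single label at a single level that is \SP, and then to propagate this downward through a chain of applications of Lemma~\ref{lemma:BinaryExpo} to recover an actual repeated sub-tree. First I would invoke Lemma~\ref{lemma:OCCSP}: since $\Phi(\mathcal{A})$ is \SP, the function $OCC^{l}_{\Phi(\mathcal{A})}$ is \SP\ on $m=\card{B(T)}$ for almost all $l\in\mathbb{N}$; unwinding the definitions of $OCC^{l}_{\mathcal{A}}$ and $OCC^{l,q}_{\mathcal{A}}$, this gives, for almost all $T\in\Phi(\mathcal{A})$ with $\card{B(T)}=n$ large, a level $i\leq h(T)$ and a label $q\in B(T)$ with $\size{V_T^{i,q}}\geq c\cdot a^{n^{p}}$ for suitable $a>1$, $c>0$, $p>1$. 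Among all such levels, take the \emph{least} one, call it $i_0$; this choice is exactly what will give the ``no level $j<i$ is \SP'' clause at the end, provided I verify that below $i_0$ every label is only polynomially frequent — which follows because $\sum_{q\in B(T)}\size{V_T^{j,q}}$ is the number of nodes at level $j$, a quantity bounded by $2^{j}$, and more to the point, if some $occ^{j,q}$ were \SP\ for $j<i_0$ that would contradict minimality once we re-run Lemma~\ref{lemma:OCCSP}'s extraction; I would state this carefully as the ``minimal \SP\ level'' and argue it is well-defined for almost all $T$.

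Next I would take the pair $(i_0,q)$ and apply Lemma~\ref{lemma:BinaryExpo}: because $\size{V_T^{i_0,q}}$ is exponentially large on $n=\size{B(T)}$, there is a pair of labels $l_1,l_2\in B(T)$ with $l_2=l_1\odot q$ and $\size{V_T^{i_0+1,l_1}}=\size{V_T^{i_0+1,l_2}}\geq (a^{\frac{a-1}{a}})^{n^{p}}$, and the $E_L/E_R$ structure sets up a bijection between $V_T^{i_0+1,l_1}$ and $V_T^{i_0+1,l_2}$ (each mediated by a unique parent in $V_T^{i_0,q}$). Since $a^{\frac{a-1}{a}}>1$, the level $i_0+1$ still carries an exponentially (on $n$) frequent label. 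Iterating Lemma~\ref{lemma:BinaryExpo} down the tree — at each step we may lose a constant factor in the base of the exponential, but finitely many steps (bounded by $h(T)-i_0 \leq k\cdot n$) of such losses, each of the shape $a \mapsto a^{(a-1)/a}$, still leave a base strictly above $1$ as long as the number of steps is controlled; here the linear height bound is essential, and I would make the exponent-tracking explicit to confirm the product of the shrink factors does not collapse to $1$. This yields, for every level $i_0\leq i\leq h(T)$, a label $q_i$ with $occ^{i,q_i}$ \SP, and a nested system of bijections linking the node-sets $V_T^{i,q_i}$.

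The remaining work — and the step I expect to be the main obstacle — is to turn these per-level \SP\ node-sets with their linking bijections into a single \emph{sub-tree} $T'$ that occurs \SP-many times, i.e.\ to realize the ``skeletal-tree occurrence'' notion of Definition~2 (Skeletal-tree occurrence). The chain of bijections from Lemma~\ref{lemma:BinaryExpo} threads each node of $V_T^{i_0,q}$ down to a unique tuple of descendants at every lower level, but a priori these descendant-tuples need not assemble into \emph{congruent} labelled sub-trees across the different starting nodes: two nodes $u,u'\in V_T^{i_0,q}$ could have sub-trees of different shapes hanging off them. To fix this I would use Lemma~\ref{lemma:UniqueBranches} and Proposition~\ref{fact:UniqueTopmost}: in a normal proof the E-part of a branch is completely determined by the label of its topmost major premise, so the \emph{skeletal} shape of the sub-tree rooted at a node is a function of bounded information (at most $\frac{1}{3}\card{B(T)}$ many E-parts, by Lemma~\ref{lemma:premises} and the degree-$\geq 3$ remark), hence there are only $n^{O(1)}$, indeed polynomially-many-in-$n$, possible skeletal shapes of sub-trees rooted at nodes labelled $q$; one more application of the pigeonhole principle (Lemma~\ref{lemma:SPPHP}, with the polynomially-many shapes playing the role of the index set) then forces one particular skeletal shape $T'$ to be shared by \SP-many of the exponentially-many nodes in $V_T^{i_0,q}$. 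Setting $i=i_0$, this $T'$ is the desired sub-tree: $occ^{i,l(r(T'))}$ is \SP\ by construction, and no level below $i$ has an \SP\ label by the minimality of $i_0$. I would close by noting the ``almost all'' quantifier propagates unchanged since every step (Lemma~\ref{lemma:OCCSP}, Lemma~\ref{lemma:BinaryExpo}, the two pigeonhole applications) is stated ``for almost all $T$'' or ``for all $n>n_0$.''
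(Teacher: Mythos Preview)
Your opening moves match the paper's: invoke Lemma~\ref{lemma:OCCSP} to obtain a level and label with exponentially many occurrences, then take the minimal such level $i_0$ to secure the ``no $j<i$'' clause. The gap is in your assembly step. The final pigeonhole rests on the claim that there are only polynomially many skeletal shapes of full sub-trees rooted at nodes labelled $q$, justified via ``at most $\tfrac{1}{3}\card{B(T)}$ many E-parts''. But Lemma~\ref{lemma:premises} and Lemma~\ref{lemma:UniqueBranches} bound the number of E-parts available to a \emph{single} branch; a full sub-tree rooted at $v$ contains one branch per leaf, and each branch independently selects among those $\leq n/3$ E-parts, so the number of possible skeletons is of order $(n/3)^{\text{(number of branches)}}$, which is not polynomial in $n$. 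Put differently, if the shape-count were polynomial, the redundancy you are proving would follow trivially without any of the preceding machinery. A second, related omission: your iteration of Lemma~\ref{lemma:BinaryExpo} presupposes every node in $V_T^{i_0,q}$ has two children, and says nothing about leaves or about nodes with a single $E_U$-child; also, iterated $k\cdot n$ times, the base-shrinking $a\mapsto a^{(a-1)/a}$ you track does collapse toward $1$ (the exponent picks up a factor $((a-1)/a)^{kn}\to 0$), so the bookkeeping you flag as ``to be made explicit'' will not go through in that form --- one must instead track the raw count $N\mapsto N/\Theta(n)$ and use $p>1$.

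The paper replaces your global shape-pigeonhole by an induction on the distance of the nodes of $V_T^{i,q}$ to the leaves, after first splitting $V_T^{i,q}$ into three disjoint pieces $Zero$, $Un$, $Bin$ according to whether a node has zero, one, or two children; Lemma~\ref{lemma:SPPHP} makes one of them exponential. If $Zero$ is, the single-vertex tree labelled $q$ is already the repeated sub-tree (base case). If $Un$ is, the children (whose label is uniquely determined by $q$ via item~\ref{item:unique-U} of Definition~\ref{EOLTree}) form an exponential set one level higher with strictly smaller distance-to-leaves, so the inductive hypothesis yields a repeated sub-tree $T''$ there, and adjoining the $q$-parent gives $T'$. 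If $Bin$ is, Lemma~\ref{lemma:UniqueBranches} bounds the possible right-branches (E-parts) linearly, so one fixed E-part is shared by exponentially many nodes, and the inductive hypothesis is then applied at the left-children along that E-part (this is where Lemma~\ref{lemma:BinaryExpo} enters). The point is that the repeated sub-tree is built incrementally, level by level from the leaves down, rather than selected in one shot from a putative polynomial menu of shapes.
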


\begin{proof} of lemma~\ref{lemma:Main}.
The essential reasoning in this proof is to use the observation~\ref{obs:polypoints} together with lemma~\ref{lemma:OCCSP} above. The observations are used to prove that the set of levels that has no label repeated exponentially or more many times is not empty. 
  The lemma~\ref{lemma:OCCSP} encapsulate  this reasoning. A first observation is that by lemma~\ref{lemma:OCCSP} for almost all trees $T\in\mathcal{A}$ there is a level $i\leq h(T)$, such that $\sum_{q\in B(T)}V^{i,q}$ is exponential. The last phrase is a kind of abuse of language, but it facilitates the understanding of the argumentation. Formally, we had to state that given an exponential lower-bound $a^{cm}$, we prove that there is $i$, such that,  $a^{cm}\le\sum_{q\in B(T)}V^{i,q}$, from the hypothesis, that $a^{cm}\le V(T)$, but we can use the lemma to infer this existence directly.
  
  Turning back to the proof, since $\sum_{q\in B(T)}V^{i,q}$ is exponential (or longer than), using lemma~\ref{lemma:SPPHP} we can conclude that there is $q\in B(T)$ , such that, $V_{T}^{i,q}$ is exponential or longer than exponential. We choose, $i$ and $q$ among the least possible values. Thus, any level $j$, $j<i$, does have sub-exponentially many nodes labeled by each label $q^{\prime}\in B(T)$.
  Any node $v$ in level $i$, with $l(v)=q$, can belong to one and only one of three disjoint sets $Bin$, $Un$ and $Zero$, according if it has two, only one, or none children in $T$. Since $V_{T}^{i,q}= Bin+Un+Zero$ is exponentially lower-bounded, then by lemma~\ref{lemma:SPPHP} at least one of them is exponentially lower-bounded too. Of course, more than one can be exponentially lower-bounded, but we proceed the proof, without generality loss, for each case separately. We prove by induction of the minimal distance of the elements of $V_{T}^{i,q}$.
  \begin{itemize}
     \item The basis case is when one of the  subsets of $V_{T}^{i,q}$ is the set $Zero$. In this case, the distance is 0, for the elements of $Zero$ are leaves themselves. Then, the sub-tree $T^{\prime}$ is formed by each $v\in Zero$ itself, labelled with $q$. They occur exponentially or more than exponentially many times in $T$.
  \item The subset, provided by lemma~\ref{lemma:SPPHP} is $Un$, hence the set $Un^{-}=\{u:\mbox{$\left<v,u\right>\in E(T)$ and $v\in Un$}\}$ is exponentially or longer too, by the item~\ref{item:unique-U} of definition~\ref{EOLTree}. Thus, by inductive hypothesis,  there is a sub-tree $T^{\prime\prime}$ that has the elements of $Un^{-}$ as roots. Thus, we have exponentially or longer many sub-trees of the form $T^{\prime\prime}$ occurring in $T$, with roots in level $i+1$. By adjoining the nodes in $Un$ to these trees, we obtain exponentially or longer many trees $T^{\prime}$ occurring in $T$. We remind the induction value of $Un^{-}$ is smaller than the induction value of $Un$.
  \item The case when $Bin$ is an exponentially bigger subset of $V_{T}^{i,q}$ is analogous to the above. There are at most $\frac{1}{3}\size{V_{T}^{i,q}}$ right-handed branches $T_{right}$ of formulas labeled by $q$ in level $i$. By an iterated  use of inductive hypothesis in each of the levels of $T_{right}$ we obtain sub-trees $T_{left}$  that occurs exponentially, or more, many times in $T$ in the level $i+1$ on. We join them in trees with root in the level of $i$. We obtain the desired trees occurring with roots in $i$ that are repeated at least exponentially. We have only to observe the use of lemma~\ref{lemma:UniqueBranches} to be sure about the fact that there are only linearly many possible labels for the children of the nodes in $Bin$ and lemma~\ref{lemma:BinaryExpo} does all the job. 
\end{itemize}

\end{proof}

Using the lemma~\ref{lemma:Main} we obtain the following theorem, the main result of this article. Its proof is a simple application of the lemma on $EOL$-trees derived from Natural Deduction proofs, observing that all conditions for being a valid $EOL$-tree are fulfilled by the derived trees from ND of \mil, including the fact that the $\odot$ concrete operation is such that $l_1\odot l_2=l_1\imply l_2$. We only state that a sub-derivation of a Natural Deduction proof/derivation is any full sub-tree of the underlying tree of the Natural Deduction derivation.  

\begin{theorem}
  Every huge proof in $\mil$ is redundant, in the sense that there is a sub-derivation of it that is exponentially or more, many times repeated in some level $i$ of the proof.
\end{theorem}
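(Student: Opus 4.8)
The plan is to reduce the theorem directly to Lemma~\ref{lemma:Main} by verifying that the trees arising from normal Natural Deduction derivations in $\mil$ are genuine linearly height bounded $EOL$-trees, and that the set of such trees for huge proofs is \SP. First I would fix an arbitrary \SP\ set $\mathcal{S}$ of normal $\mil$ proofs (say, all normal proofs of the tautologies in some \SP\ family, or the sets $Fib$, $NHam$, $Exp$ from Section~\ref{sec:HugeProofs}), and pass to the associated set of underlying labelled trees as described in the introduction: each $\imply$-Elim becomes a binary node (with $E_L$ for the minor-premise edge and $E_R$ for the major-premise edge), each $\imply$-Intro becomes a unary $E_U$ node, and each node is labelled by the formula it concludes. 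The key point here is that, by the Sub-formula Principle (Corollary after Theorem~\ref{theo:Normalization}), every label is a sub-formula of the conclusion $\alpha$, so $B(T)$ is finite and in fact $\card{B(T)} \le \size{T_\alpha}$; together with the linear-height hypothesis (justified either by the assumption we carry w.l.g., or by the quadratic bound of \cite{Studia} combined with the $CoNP$ discussion in the appendix) this makes $T$ a linear-height $EOL$-tree.

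Next I would check each clause of Definition~\ref{EOLTree} against the structure of normal $\mil$ derivations. Clauses on disjointness and functionality of $E_L,E_R,E_U$ are immediate from the fact that each inference has a fixed number of premises. The substantive clauses are \ref{item:one-to-one} and \ref{item:unique-U}: for a binary node $v$ concluding $B$ via $\imply$-Elim from minor premise $A$ (child $v_1$) and major premise $A\imply B$ (child $v_2$), we must exhibit the partial operation $\odot$ with $l(v_1)\odot l(v) = l(v_2)$ and the uniqueness of $l(v_1)$ given $l(v),l(v_2)$. Taking $\odot$ to be $l_1 \odot l_2 := l_1 \imply l_2$ makes this hold: $A \imply B$ is determined by $A$ and $B$, and conversely $A$ is the unique formula with $A \imply B = l(v_2)$ once $l(v)=B$ is fixed, since implication is injective in its antecedent. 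For the unary clause \ref{item:unique-U}, an $\imply$-Intro concluding $A\imply B$ from $B$ has $q = A$ with $q \odot l(v') = A \imply B = l(v)$. The partial-order requirement $l(v') \prec_B l(v)$ is exactly the sub-formula ordering (premise is a sub-formula of conclusion), which was observed in Section~\ref{terminology}. So the derived trees satisfy Definition~\ref{EOLTree} verbatim with the concrete $\odot = {\imply}$.

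With this verification in hand, the set of $EOL$-trees underlying $\mathcal{S}$ is an unlimited set of linear-height $EOL$-trees, and it is \SP\ precisely because $\mathcal{S}$ consists of huge proofs: by Proposition~\ref{prop:FuncaoF} hugeness of the proof family transfers to hugeness of the tree family (the size of a proof and the size of its underlying tree differ by at most a polynomial factor, since each node carries a sub-formula label plus a length-$\size{T_\alpha}$ bitstring for its dependency set). Then Lemma~\ref{lemma:Main} applies: for almost all such $T$ there is a level $i \le h(T)$ and a sub-tree $T'$ whose root label $l(r(T'))$ satisfies that $occ^{i,l(r(T'))}$ is \SP, i.e.\ $T'$ occurs exponentially (or more) many times at level $i$. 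Finally I would translate this back: a full sub-tree of the underlying tree is, by construction, the underlying tree of a sub-derivation of the original Natural Deduction proof (every clause of Definition~\ref{EOLTree} is inherited by full sub-trees, and the edge-type labels $E_L,E_R,E_U$ record exactly which inference rule produced each node, so $T'$ reconstitutes a bona fide $\mil$ derivation). Hence that sub-derivation is repeated exponentially many times at level $i$, which is the assertion of the theorem.

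The main obstacle I anticipate is not the combinatorial core — that is entirely absorbed into Lemma~\ref{lemma:Main} — but the bookkeeping at the translation boundary: one must be careful that the notion of ``sub-derivation'' used in the conclusion matches ``full sub-tree of the underlying $EOL$-tree,'' in particular that discharging annotations (which the $EOL$-tree abstraction deliberately suppresses) can be consistently restricted to the sub-derivation, and that distinct tree-occurrences of $T'$ really do correspond to distinct, self-contained sub-derivations rather than to overlapping fragments. This is routine given the remark in the excerpt that a sub-derivation is defined to be any full sub-tree, but it is the place where the argument could quietly go wrong if one is sloppy about what data the $EOL$-tree retains versus discards.
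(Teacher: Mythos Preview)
Your proposal is correct and follows essentially the same route as the paper: the paper's proof is a one-paragraph remark that the theorem is a direct application of Lemma~\ref{lemma:Main} to the $EOL$-trees derived from normal $\mil$ proofs, with the concrete operation $\odot$ instantiated as $l_1\odot l_2 = l_1\imply l_2$ and sub-derivations identified with full sub-trees. Your verification of the clauses of Definition~\ref{EOLTree} and your remark about discharging annotations are more detailed than the paper's own treatment (the paper simply notes that sub-trees need not be sub-proofs because discharges may lie below, and that the dependency-set bitstrings coincide across repeated sub-trees), but the argument is the same.
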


We have only to remember that the sub-trees may not be sub-proofs, for the discharging of assumptions is below them. One last, but not least, thing to remind the reader is the fact that since the sub-trees are the same, the edges have the  same dependency set . Hence, the bitstring that is used to form the represention, formally,  of Natural Deduction proofs and derivation in as $EOL$-trees do not even need to be altered or manipulated in the demonstrations we made in this article.

\section{Conclusion}\label{sec:Conclusion}

In this article, we proved that redundancy is an inherent property of huge normal proofs in Natural Deduction for $\mil$. The extension of this result to the full language of propositional logic seems to hold. The proof of this result, however, can be more complicated and of no evident use. About an application of the result reported here. We can apply it directly to have a compression algorithm to propositional $\mil$ proofs. Using the redundancy of normal proofs, we can improve a previous technique applied in \cite{Studia}, that collapses all nodes labelled with the same label occurring in a given level in a unique node. Instead of collapse only one node, we can collapse a  whole sub-derivations that repeats polynomially many in a super-polynomial proof of a $\mil$-tautology. This new way of collapse produces a simpler version of the horizontal collapse than those that appear in \cite{Studia} to produce arbitrary  $NP$-approximations of the validity in $\mil$, that it is a $PSPACE$ problem. We can say that using the collapse of whole sub-derivations, we obtain  dag-proof for $\mil$-tautologies that are much smaller than their original tree-like forms.
We have an algorithm that performs the verification that our dag-proof is a proof (or not) in cubic time on the size of the dag. 
It is out of the scope of this article to show this algorithm of verification and its time complexity.  

Finally, we have to mention that the redundancy theorem here proved is not a constructible result. Finding the level of the exponential sub-tree is not a feasible task, neither is to find the lowest, in the tree,  possible polynomial level. As further work, we want to use the technique of collapsing sub-trees in the set $NHam$ (see section~\ref{sec:HugeProofs}) to try an alternative\footnote{So to say, a proof that is not a direct consequence of a (possible) proof of $NP=PSPACE$} proof of $CoNP=NP$. The compression method to obtain the certificates for the non-hamiltonicity seems to be unfeasible. This feature has nothing to do with the polynomial-size and polynomial-time verification of any certificate obtained by its means. The feasibility (or not)  of the compression method, so to say the existence of a polytime algorithm to obtain a poly-size and polytime verifiable certificate for minimal tautologies,  has more to do with the conjecture $NP=P$ or $P=CoNP$. 

\section{Appendix}\label{appendix:hamiltonian}

In this appendix we show how to use $\mil$ to generate certificates of the non-hamiltonicity, non-existence of a Hamiltonian cycle, by means of proofs.

A simple directed graph is a directed graph having no multiple edges, that is, for every pair of nodes $(v_1,v_2)$ from the graph there is at most one edge from $v_1$ to $v_2$. 
Given a simple directed graph  $G=\langle V,E\rangle$, with $card(V)=n$, a hamiltonian path in $G$ is a sequence of nodes $v_1v_2\ldots v_n$, such that $v_i\in V$, $i=1,n$, and for each $i,j\in V$, if  $v_i=v_j$ then $i=j$, i.e, a path has no repetition of nodes. Moreover,  if  $v_iv_{i+1}$ is in the path then exists the edge $(v_i,v_{i+1}\in E$. The (decision) problem of knowing whether there is or not a hamiltonian path in a graph is known to be NP-complete. Thus, given a graph $\langle V,E\rangle$, with $card(V)=n$, to verify that a sequence of $n$ nodes is a hamiltonian path it is enough to verify that: (1) There is no repeated node in the sequence; (2) No element $v\in V$ is out of the sequence, and; (3) For each pair $v_i v_j$ in the sequence there is a edge $(v_i,v_j)\in E$. We can see that these verifications require polynomial time on the size of the graph and that any path is linearly bounded by $n$. Thus any sequence of nodes of a graph can be viewed as a polynomially verified certificate for this graph hamiltonicity. Consider the following reduction of Hamiltonian path to SAT, quite well-known from the literature in computational complexity (see \cite{arora}).

\begin{definition}\label{alpha}
Given $G=\langle V,E\rangle$, $card(V)=n>0$. Let $X_{i,v}$, $i=1,n$ e $v\in V$ be the proposicional language. Intuitively, $X_{i,v}$ express that the vertex $v$ is visited in the step $i$ in a path on $G$. Consider the formulas in the following definition:
\begin{enumerate}
\item\label{A} $A=\bigwedge_{v\in V} (X_{1,v}\lor\ldots\lor X_{n,v})$ indicating that every vertex can be visited in any step in a hamiltonian path/cycle;
\item\label{B} $B=\bigwedge_{v\in V}\bigwedge_{i\neq j}(\neg (X_{i,v}\land X_{j,v}))$ indicanting that there are no repetitions in any hamiltonian path/cycle; 
\item\label{C} $C=\bigwedge_{i=1,n}\bigvee_{v\in V} X_{i,v}$ that says that in each step  one vertex should be visited; 
\item\label{D} $D=\bigwedge_{v\neq w}\bigwedge_{i=1,n} \neg(X_{i,v}\land X_{i,w})$ that indicates  that each step can visit at most one vertex, and;
\item\label{E} $E=\bigwedge_{(v,w)\not\in E}\bigwedge_{i=1,n-1}(X_{i,v}\imply \neg X_{i+1,w})$ that indicates that if there is no edge from $v$ to $w$ then $w$ cannot be visited immediately after $v$; 
\end{enumerate}
\end{definition}
We can see that $G$ has a hamiltonian path if and only if $\alpha_{G}=A\wedge B\wedge C\wedge D\wedge E$ is satisfiable. Any hamiltonian path $v_1\ldots v_n$ induces a truth-assignment $T$, such that $T(X_{i,w})=true$ if and only if $w=v_i$, that satisfies $\alpha_G$. Conversely, any truth-assignment that satisfies $\alpha_G$ induces a hamiltonian path in $G$. If we denote $SAT_{Cla}$ the set of satisfiable formulas for the classical propositional logic and as $TAUT_{Int}$ the set of tautologies for the intuitionistic propositional logic, we can observe that the following statements are equivalent:
\[
(1) \mbox{$G$ is not hamiltonian if and only if $\alpha_{G}\not\in SAT_{Cla}$}
\]
\\
\[
(2) \mbox{$G$ is not hamiltonian if and only if $\alpha_{G}$ is unsatisfiable}
\]
\\
\[
(3) \mbox{$G$ is not hamiltoniano if and only if $\neg\alpha_{G}\in TAUT_{Cla}$}
\]
\\
\[ 
(4) \mbox{$G$ is not hamiltonian if and only if $\neg\alpha_{G}\in TAUT_{Int}$}
\]
Hence, $G$ is non-hamiltonian graph if and only if there is an intuitionistic proof (positive certificate) for $\neg\alpha_{G}$.
Such proof is a certificate for non-hamiltonicity of graph $G$.
To go from statement (3) to (4), we use Glyvenko theorem.
In \cite{Statman79,Haeusler2014} it is described a translation from formulas in the full language $\{\bot,\neg,\land,\lor,\imply\}$ to the purely implicational formulas, .i.e, formulas containing only the constant symbol $\imply$ and propositional variables. From any formula $\gamma$, the formula $\gamma^{\star}$ from purely implicational minimal logic is provable in the minimal logic if and only if $\gamma$ is provable in intuitionistic logic. Moreover, concerning the sizes of the formulas, we have that $size(\alpha^{\star})\leq size^{3}(\alpha)$ (\cite{Haeusler2014}). The main idea described in \cite{Statman79,Haeusler2014} is the use of implicational schemata that simulate the introduction and elimination of Natural Deduction rules. This simulation employs the use of new/fresh propositional variables. For example, for each pair of formulas $A$ and $B$, we add the propositional variable $q_{A\lor B}$ and the formulas
$A\imply q_{A\lor B}$, $B\imply q_{A\lor B}$ are used to simulate the $\lor$-introduction rules.
Hence, any application of the rule:
\begin{prooftree}
  \AxiomC{$A$}
  \UnaryInfC{$A\lor B$}
\end{prooftree}
is replaced by the following derivation in $ND_{\imply}$
\begin{prooftree}
  \AxiomC{$A$}
  \AxiomC{$A\imply q_{A\lor B}$}
  \BinaryInfC{$q_{A\lor B}$}
\end{prooftree}
In this way the new derivation is normal too. Remember that the changing of language, i.e., replacing the formula $A\lor B$ by $q_{A\lor B}$ is performed in all formulas of the original derivation. 
The formulas $(A\imply \beta)\imply((B\imply\beta)\imply(q_{A\lor B}\imply\beta))$, for each $\beta$ sub-formula from $\neg\alpha$, simulate the $\lor$-elimination. The fact that the original derivation is normal ensures that any application of an $\lor$-elimination has minor premisses as sub-formulas of the hypotheses or of the conclusion, for the sub-formula principle holds for normal derivation. However,the translation from the intuitionistic full language to the purely implicational language is required to translate any intuitionistic tautology to its implicational tautological counterpart. However, when translating certificates of non-hamiltonicity we can be more economical, as we explain below.
 
A (normal) proof of $\neg\alpha_{G}$, where $G$ is a non-hamiltonian graph, is a proof of $\bot$ from $\alpha_{G}$. Since $\alpha_{G}$ is a conjunction we can consider the certificate of non-hamiltonicity as any proof of $\bot$ from the set of formulas that form the components of the conjunctions. Thus, we consider (A) the disjunctions of the form $(X_{1,v}\lor\ldots\lor X_{n,v})$, with $v\in V$, from the item~\ref{A} of the definition of $\alpha_{G}$; (B) the formulas $\neg (X_{i,v}\land X_{j,v})$, with $i=1,n$ and $v\in V$, from item~\ref{B}; (C) the formulas $\bigvee_{v\in V} X_{i,v}$, with $i=1,n$, from item~\ref{C}; (D) the formulas $\neg(X_{i,v}\land X_{i,w})$, $i=1,n$ and $w\in V$, from item~\ref{D}, and; (E) finally, the formulas $X_{i,v}\imply \neg X_{i+1,w}$, with $i=1,n-1$ and $(v,w)\not\in E$, from item~\ref{E}. 
Let us examine a bit more the proof of $\neg\alpha^{\star}$ as a proof of $\bot$ from this set of formulas $S_{G}$ just detailed above.
A naive proof of non-hamiltonicity proceeds by considering every possible path starting at every possible node of the graph, and then, considering every possible node of the graph to be visited at the second step and so on. Of course, as $G$ has no hamiltonian path, every possible choice ends up in a contradiction. Either there is no unvisited node possible to be visited and we use formulas from the conjunctions in $B$ and  $D$ to obtain the absurdity $\bot$ or there is no node to visit at all, and we use conjunctions from $E$ to prove the $\bot$. Each step's choice is accomplished by a $\lor$ elimination having a disjuntion from the formula $A$ and/or $C$ as major premisse. The proof is a kind of decision tree upwards-down. It is size is upper-bounded by $n^n$, while its height is linear on $n$. We observe that each $\lor$-elimination is replaced by a combination of $\imply$-Introductions and $\imply$-eliminations that increase the height of the tree by 2 for each application of $\lor$-elimination. See schematic proof below. There is no need of $\lor$-introduction. There is no need of the $\bot$ intuitionistic rule. In fact the translation in \cite{Statman79} that replaces the $\bot$ by a new/fresh proposicional variable $q$ is used formally with the sake of uniformity.

Summing up, for each non-hamiltonian simple and directed graph $G$, with $n$ nodes, there is a set of formulas $S_{G}$ with complexity (lenght) at most $n^3$ and a proof of size $n^n$ rules and height at most $3n$ rules that has $q$ as conclusion. This proof is the certificate of non-hamiltonicity of $G$. This proof is also a normal proof, as defined by Gentzen and Prawitz. Hence, it satsifies the sub-formula principle. 
By means of the result explained afterwards in this article we can show that there is a polynomial certificate, polynomially verified in time,  for the non-hamiltonicity of $G$. Since non-hamiltonicity of simple and directed graphs is CoNP-complete then we conclude that NP=CoNP.   

\subsection{Normal $ND_{\imply}$  proofs of non-hamiltonicity of simple directed graphs}

Consider a non-hamiltonian graph $G=\langle V,E\rangle$ and the formula $\neg\alpha_{G}$ as stated in the previous section. Thus, $\alpha_{G}=A\wedge B\wedge C\wedge D\wedge E$, with $A$, $B$, $C$, $D$ and $E$ as defined in items~\ref{A} until~\ref{E} from definition~\ref{alpha}. Let $p$ be any sequence of nodes from $V$ of length  $card(V)$. From $p$ we have the set $\{X_{1,p[1]},\ldots,X_{n,p[n]}\}$ of propositional variables from the language of $\alpha_{G}$. This set, as the sequence $p$, represents a potential path in the graph $G$, namely, the path that starts by visiting vertex $p[1]$, this is  $X_{1,p[1]}$ holds, visits vertex $p[2]$, i.e. $X_{2,p[2]}$ holds, until ending with the visit of $p[n]$. However, this sequence is not checked as a valid path. The set $\{X_{1,p[1]},\ldots,X_{n,p[n]}\}$ is a valid path if and only if it does not inconsistent with the formula $\alpha_G$. As $G$ does not have any hamiltonian path then we known that $\{X_{1,p[1]},\ldots,X_{n,p[n]}\}$ is inconsistent with $\alpha$. We consider a mapping that given a sequence $p$ the set $X_p=\{X_{1,p[1]},\ldots,X_{n,p[n]}\}$ is inconsistent with $alpha_G$, $p\mapsto X_p$. Thus, for any $p$, there is a derivation of $\bot$ from $\alpha$ in Natural Deduction by completeness. Using the translations described in \cite{Statman79, Haeusler2014} we have a normal derivation of $q$ from $alpha_{G}^{\star}$ and the set $X_p$. This is the content of lemma~\ref{paths}. Considering the set $P$ of all sequences of lenght $n$ and the lemma~\ref{paths} we have a set of normal proofs $\{\Pi_p : p\in P\}$, where each $\Pi_p$ is of the form:
\begin{prooftree}
  \AxiomC{$X_p$}
  \noLine
  \UnaryInfC{$\Pi_p$}
  \noLine
  \UnaryInfC{$q$}
\end{prooftree}
where sometimes we use $\Pi_p$ to denote the above derivation. Moreover, in order to have an easier understanding, the derivation $\Pi_p$ is taken as depending from the whole set $X_p$ even when not every formula $X_{i,p[i]}$, $i=1,n$, occurs in $\Pi_p$. We consider the set of nodes (vertexes) ordered as in $\{v_1,\ldots,v_k\}$, where $k=card(V)$. We introduce the following notations:
\begin{definition}
  Given a sequence of vertexes $p:\{1,\ldots,n\}\mapsto V$, we denote the sub-sequence $p[1]\ldots p[j]$ of $p$ as $p[1..j]$, where $j\in\{1,\ldots,n\}$. Of course, $p[1]=p[1..1]$ and $p[1..n]=p$. Moreover, we denote by $p[-j]$ the sub-sequence $p[1..n-j]$, $j\leq n-1$. Obviously, $p[-(n-1)]=p[1]$. The concatenation of sequences $p$ and $q$ is denoted by $p;q$. 
\end{definition}
In the sequel, given a set $X_p=\{X_{1,p[1]},\ldots,X_{n,p[n]}\}$, we denote by $X_{p[-1]}$ the set $\{X_{1,p[1]},\ldots,X_{n-1,p[n-1]}\}$. When dealing with sets of formulas $A$, the union $A\cup\{\beta\}$ can be denoted by $A,\beta$.

Consider now the following derivations $\Pi_{p[-1]}$, for each sequence $p\in P$, that use the set of normal proofs $\{\Pi_p:p\in P\}$ defined as above.
\begin{center}
  $\Pi_{p[-1]}$=\begin{prooftree}
  \AxiomC{$ X_{n,v_1}\lor\ldots\lor X_{n,v_n}$}
  \AxiomC{$X_{p[-1],X_{n,v_1}}$}
  \noLine
  \UnaryInfC{$\Pi_{p[-1];X_{n,v_1}}$}
  \noLine
  \UnaryInfC{$q$}
  \AxiomC{$\ldots$}
  \AxiomC{$X_{p[-1],X_{n,v_n}}$}
  \noLine
  \UnaryInfC{$\Pi_{p[-1];X_{n,v_n}}$}
  \noLine
  \UnaryInfC{$q$}
  \QuaternaryInfC{$q$}
  \end{prooftree}
  \end{center}
If we consider that each $\Pi_p$ is a derivation in $ND_{\imply}$, cf. lemma~\ref{paths},  then the following derivation is the derivation $\Pi_{p[-1]}$ translated to purely implicational minimal logic Natural Deduction, i.e. $ND_{\imply}$. We used above and from now on a n-ary version of the $\lor$-elimination rule, with the sake of a shorter presentation. 
{\small
\begin{center}
  $\Pi_{p[-1]}$=\begin{prooftree}
  \AxiomC{$ORX_{n}$}
  \AxiomC{$X_{p[-1],[X_{n,v_2}]}$}
  \noLine
  \UnaryInfC{$\Pi_{p[-1];X_{n,v_2}}$}
  \noLine
  \UnaryInfC{$q$}
  \UnaryInfC{$X_{n,v_2}\imply q$}
  \AxiomC{$X_{p[-1],[X_{n,v_1}]}$}
  \noLine
  \UnaryInfC{$\Pi_{p[-1];X_{n,v_1}}$}
  \noLine
  \UnaryInfC{$q$}
  \UnaryInfC{$X_{n,v_1}\imply q$}
  \AxiomC{$(X_{n,v_1}\imply q)\imply ((((X_{n,v_2}\imply q)\imply\ldots((X_{n,v_n}\imply q)\imply (ORX_{n}\imply q)))$}
  \BinaryInfC{$(\ldots(X_{n,v_2}\imply q)\ldots ((X_{n,v_n}\imply q)\imply (ORX_{n}\imply q)))$}
  \BinaryInfC{$(\ldots(X_{n,v_3}\imply q)\ldots ((X_{n,v_n}\imply q)\imply (ORX_{n}\imply q)))$}
  \noLine
  \UnaryInfC{$\vdots$}
  \UnaryInfC{$ORX_{n}\imply q$}
  \BinaryInfC{$q$}
  \end{prooftree}
\end{center}
}
The propositional variable $ORX_{n}$ is the translation os the disjunction $X_{n,v_1}\lor\ldots\lor X_{n,v_n}$, as indicated by the translations schemata described in \cite{Statman79} and \cite{Haeusler2014}. Moreover, we can see that the derivations $\Pi_{p[-1]}$, for each $p\in P$ are normal derivations in $ND_{\imply}$.
Since we are building the proof of non-hamiltonicity from the last step to the first, the $j$-th element in the sequence $p$ is related to the choice of visiting vertexes of the graph during step $n-j+1$. The following derivation is, in analogy with previous derivation, regarded to this $j$-th choice in the sequence, and is denoted by $\Pi_{p[-j]}$. Use use $k=n-j+1$ to have a cleaner derivation. Observe that the propositional variables $ORX_{n}$ to $ORX_{k}$ were introduced as the translations of the corresponding disjunctions $X_{n,v_1}\lor\ldots\lor X_{n,v_n}$ to $X_{k,v_1}\lor\ldots\lor X_{k,v_n}$.
\vspace{1cm}

$\Pi_{p[-j]}$=

{\tiny
\begin{prooftree}
  \AxiomC{$ORX_{k}$}
  \AxiomC{$ORX_{n},\ldots,ORX_{k+1},X_{p[-j],[X_{k,v_2}]}$}
  \noLine
  \UnaryInfC{$\Pi_{p[-j];X_{k,v_2}}$}
  \noLine
  \UnaryInfC{$q$}
  \UnaryInfC{$X_{k,v_2}\imply q$}
  \AxiomC{$X_{p[-j],[X_{k,v_1}]}$}
  \noLine
  \UnaryInfC{$\Pi_{p[-j];X_{k,v_1}}$}
  \noLine
  \UnaryInfC{$q$}
  \UnaryInfC{$X_{k,v_1}\imply q$}
  \AxiomC{$(X_{k,v_1}\imply q)\imply ((((X_{k,v_2}\imply q)\imply\ldots((X_{k,v_n}\imply q)\imply (ORX_{k}\imply q)))$}
  \BinaryInfC{$(\ldots(X_{k,v_2}\imply q)\ldots ((X_{k,v_n}\imply q)\imply (ORX_{k}\imply q)))$}
  \BinaryInfC{$(\ldots(X_{k,v_3}\imply q)\ldots ((X_{k,v_n}\imply q)\imply (ORX_{k}\imply q)))$}
  \noLine
  \UnaryInfC{$\vdots$}
  \UnaryInfC{$ORX_{k}\imply q$}
  \BinaryInfC{$q$}
\end{prooftree}
}
The derivation that is our goal is the application of the schemata above to the first step. We can see that there are only $n$ $p[-(n-1)]$ sequences, and by the recursive definitions we used above, there are only $n$ derivations $\Pi_{p[-(n-1)]}$ normal derivations of q. Finally, by a last application of the implicational schema for the $\lor$-elimination we obtain a proof of $q$, the propositional variable used a s the translation of the $\bot$, from $ORX_{1}$ to $ORX_{n}$ that are in fact parts of $A$ and from the formulas $D$, $E$ and $B$ used by lemma~\ref{paths} in producing the derivations $\Pi_{p}$, for each $p\in P$. Now, by an iterated application of a series of $\imply$-introduction rules, we obtain a proof the translation of the formula $\neg\alpha$ in the purely implicational minimal logic Natural Deduction.

\begin{lemma}\label{paths}
  Let $G=\langle V, E\rangle$ be a simple and directed non-hamiltonian graph and 
  $p=p[1]\ldots p[n]$ be a sequence of vertexes from $V$. Then there is a (not necessarily unique) derivation
  \begin{prooftree}
  \AxiomC{$X_p$}
  \noLine
  \UnaryInfC{$\Pi_p$}
  \noLine
  \UnaryInfC{$q$}
\end{prooftree}
  where $X_p=\{X_{1,p[1]},\ldots,X_{n,p[n]}\}$ in $ND_{\imply}$.
\end{lemma}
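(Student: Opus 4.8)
The plan is to follow the recipe sketched just before the statement, but to make it concrete. Since $G$ is non-Hamiltonian the formula $\alpha_G$ is unsatisfiable, so $X_p$ together with the conjuncts of $\alpha_G$ is inconsistent; a derivation of $\bot$ from these formulas (available directly, or abstractly via completeness), normalised by Theorem~\ref{theo:Normalization} and pushed through the translation of \cite{Statman79,Haeusler2014}, already yields a normal $ND_{\imply}$ derivation of $q$ from $X_p$ and the translated conjuncts of $\alpha_G$. To keep the later height and size bounds transparent, however, I would rather exhibit $\Pi_p$ by hand, which is possible because the inconsistency of $X_p$ with $\alpha_G$ always has a witness involving only two of the atoms $X_{i,p[i]}$. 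The first step is therefore a case split on the shape of $p$: since $p$ has length $n=\card{V}$, either $p$ is not injective, or $p$ is a bijection of $\{1,\dots,n\}$ onto $V$.

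In the first case, fix $i<j$ with $p[i]=p[j]=v$. The conjunct $\neg(X_{i,v}\land X_{j,v})$ of $B$ translates --- via the $\land$-introduction schema $X_{i,v}\imply(X_{j,v}\imply q_{X_{i,v}\land X_{j,v}})$ and the negation clause $q_{X_{i,v}\land X_{j,v}}\imply q$ --- into implicational assumptions from which a few $\imply$-Elim steps applied to $X_{i,v},X_{j,v}\in X_p$ derive $q$. In the second case, the bijective sequence $p$ cannot be a Hamiltonian path (there is none in $G$), so $(p[i],p[i+1])\notin E$ for some $i\le n-1$; the conjunct $X_{i,p[i]}\imply\neg X_{i+1,p[i+1]}$ of $E$ translates to $X_{i,p[i]}\imply(X_{i+1,p[i+1]}\imply q)$, and two $\imply$-Elim steps applied to $X_{i,p[i]},X_{i+1,p[i+1]}\in X_p$ give $q$. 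In both cases $\Pi_p$ is a pure sequence of $\imply$-Elim inferences, hence contains no maximal formula and is normal, and every formula in it is a subformula of $\neg\alpha_G^{\star}$; although only two members of $X_p$ are actually used, one records the whole of $X_p$ as the dependency set of the conclusion, as the statement dictates.

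The step I expect to cost the most attention is not a difficulty of idea but of bookkeeping: one has to verify that the conjuncts of $B$ and $E$ really do translate to the two-premise implicational schemata above --- this is precisely the negation and conjunction part of the translation of \cite{Statman79,Haeusler2014}, the $\imply$-conjuncts being already implicational --- and that the case split is genuinely exhaustive, i.e.\ that an injective sequence of length $\card{V}$ over $V$ is onto, and that a bijective sequence which were also a path would be a Hamiltonian path, contradicting the hypothesis on $G$. Once that is settled, normality and the sub-formula property of $\Pi_p$ are immediate, and they are exactly what make the $3n$ height bound and $n^{n}$ size bound go through when the $\Pi_p$ are glued together into the $\lor$-elimination decision tree of the appendix.
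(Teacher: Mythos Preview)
Your proposal is correct and matches the paper's own proof almost exactly: the paper performs the very same two-case split (repeated vertex, handled via a conjunct of $B$; missing edge, handled via a conjunct of $E$) and writes down the same two-step $\imply$-Elim derivations of $q$. The only cosmetic difference is that the paper translates $\neg(X_{i,v}\land X_{j,v})$ directly as $X_{i,v}\imply(X_{j,v}\imply q)$ rather than going through an auxiliary variable $q_{X_{i,v}\land X_{j,v}}$, so your $\Pi_p$ in the first case is one inference longer than necessary; otherwise the arguments coincide.
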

The above derivation mentioned in the above lemma can use, and must use at least one of, the formulas $B$, $C$, $D$ and $E$ from items~\ref{B},\ref{C} and ~\ref{D} from definition~\ref{alpha}

Proof of the lemma: Since the graph $G$ is not hamiltonian then any sequence of $n$ vertexes cannot be a valid path. So using one of the formulas $B$ to $D$ , in their purely implicational form, we derive $q$ (the translation of the absurdity logical constant). Since $p$ is not a valid path on $G$, then at least one of the items must hold abut $p$:
\begin{description}
\item[Visiting a vertex more than once]\label{repeting-vertex} There are $i,j$, $1\leq i<j\leq n$, such that $p[i]=p[j]=v\in V$. In this case consider $i_1$ and $i_2$ the least pair, $i_1<i_2$, such that $p[i_1]=p[i_2]$. $\Pi_p$ is the following derivation:
  \begin{prooftree}
    \AxiomC{$X_{i_2,v}$}
    \AxiomC{$X_{i_1,v}$}
    \AxiomC{$X_{i_1,v}\imply (X_{i_2,v}\imply q)$}
    \BinaryInfC{$X_{i_2,v}\imply q$}
    \BinaryInfC{$q$}
  \end{prooftree}
  where $X_{i_1,v}\imply (X_{i_2,v}\imply q)$ is the translation of $(\neg (X_{i_1,v}\land X_{i_2,v})$ to purely implicational minimal logic. Observe that $\{X_{i_1,v},X_{i_2,v}\}\subset X_p$ and $(\neg (X_{i_1,v}\land X_{i_2,v})$ is a component of the conjunction $B$ from $\alpha$.
\item[Visiting a vertex without any linking edge]\label{invalid-move} There is $i$, $1\leq i< n$, such that $p[i]=y\in V$, $p[i+1]=z\in V$ and there is no $(y,z)\in E$. In this case consider $j$ the least number $1\leq j< n$ such that $p[j]=y$, $p[j+1]=z$ and $(y,z)\not\in E$. $\Pi_p$ is the following derivation:
  \begin{prooftree}
    \AxiomC{$X_{j+1,z}$}    
    \AxiomC{$X_{j,y}$}
    \AxiomC{$X_{j,y}\imply (X_{j+1,z}\imply q)$}
    \BinaryInfC{$X_{j+1,z}\imply q$}
    \BinaryInfC{$q$}
  \end{prooftree}
  where $X_{j,y}\imply (X_{j+1,z}\imply q)$ is the translation of $X_{j,y}\imply \neg X_{j+1,z}$ to purely implicational minimal logic. Observe that $\{X_{j,y},X_{j+1,z}\}\subset X_p$ and $X_{j,y}\imply \neg X_{j+1,z}$ is a component of the conjunction $E$ from $\alpha$.
\end{description}
Since the two items above are the only two possible reasons for a sequence $p$, with lengh $n=card(V)$, not being a valid path, we are done.

\end{document}